\newcommand{\comment}[1]{}
\newtheorem{definition}{Definition}
\newtheorem{theorem}{Theorem}
\newtheorem{corollary}{Corollary}
\newtheorem{lemma}{Lemma}
\newcommand{\qed}{\mbox{\ \ \ }\rule{6pt}{7pt} \bigskip}
\newenvironment{proof}{\par{\bf Proof:}}{\qed \par}
\newenvironment{proofof}[1]{\par{\bf Proof (#1):}}{\qed \par}
\newcommand{\epsapx}{\epsilon}
\newcommand{\epsdist}{\Delta}
\newcommand{\game}{{\cal G}}
\newcommand{\epspert}{\epsilon}
\newcommand{\onem}{{\bf 1}}
\newcommand{\zerom}{{\bf 0}}
\newcommand{\fst}{f}
\newcommand{\poly}{\mathrm{poly}}
\newcommand{\eps}{\epsilon}
\newcommand{\logdel}{\log(1 + \Delta^{-1})}
\newcommand{\X}{\cal X}
\newtheorem{claim}{Claim}
\renewcommand{\Pr}{{\bf Pr}}
\newcommand{\E}{{\bf E}}
\newcommand{\supp}{{\mathrm {supp}}}
\newcommand{\tp}{\tilde{p}}
\newcommand{\Ps}{{\cal P}}
\newcommand{\Qs}{{\cal Q}}
\begin{document}
\title{\bf  Nash Equilibria in Perturbation Resilient Games}
\author{
Maria-Florina Balcan\footnote{School of Computer Science, College of Computing, Georgia Institute of Technology, Atlanta, Georgia.}
\and Mark Braverman \footnote{Princeton University.}
}

\maketitle

\begin{abstract}
Motivated by the fact that in many game-theoretic settings, the game
analyzed is only an approximation to the game being played, in this work
we analyze equilibrium computation for the broad and natural class of
bimatrix games that are stable to perturbations.  We specifically
focus on games with the property that small changes in the payoff
matrices do not cause the Nash equilibria of the game to fluctuate
wildly.  For such games we show how one can compute approximate Nash
equilibria more efficiently than the general result of Lipton et al.~\cite{LMM03}, by an amount that depends on the
degree of stability of the game and that reduces to their bound in the
worst case.  Furthermore, we show that for stable games the
approximate equilibria found will be close in variation distance to
true equilibria, and moreover this holds even if we are given as input
only a perturbation of the actual underlying stable game.

For uniformly-stable games, where the equilibria fluctuate at most
quasi-linearly in the extent of the perturbation, we get a
particularly dramatic improvement.  Here, we achieve a fully
quasi-polynomial-time approximation scheme: that is, we can find
$1/\poly(n)$-approximate equilibria in quasi-polynomial time.  This is
in marked contrast to the general class of bimatrix games for which
finding such approximate equilibria is PPAD-hard.  In
particular, under the (widely believed) assumption that PPAD is not
contained in quasi-polynomial time, our results imply that
such uniformly stable games are inherently easier for computation of
approximate equilibria than general bimatrix games.
\end{abstract}

\section{Introduction}
The Nash equilibrium solution concept has a long history in economics
and game theory as a description for the natural result of
self-interested behavior~\cite{nash51,gtbook}.  Its importance has led to
significant effort in the computer science literature in recent years
towards understanding their computational structure, and in particular
on the complexity of finding both Nash and approximate Nash
equilibria.  A series of results culminating in the work by Daskalakis, Goldberg, and Papadimitriou~\cite{DGP09} and Chen, Deng, and Teng~\cite{CD06,cdt09}
showed that finding a Nash equilibrium or even a $1/\poly(n)$-approximate
equilibrium, is PPAD-complete even for 2-player bimatrix games. 
 For general values of $\epsilon$, the best known algorithm for finding $\epsilon$-approximate equilibria
runs in time $n^{O((\log n)/\epsilon^2)}$, based on a structural
result of  Lipton et al.~\cite{LMM03} showing that there always
exist $\epsilon$-approximate equilibria with support over at most
$O((\log n)/\epsilon^2)$ strategies.  This structural result has been
shown to be existentially tight~\cite{FNS07}.  Even for large values of
$\epsilon$, despite considerable effort~\cite{DMP06,DMP07,TS07,FNS07,BBM07,kps06},
polynomial-time algorithms for computing $\epsilon$-approximate
equilibria are known only for $\epsilon \geq 0.3393$~\cite{TS07}. These
results suggest a difficult computational landscape  for
equilibrium and approximate equilibrium computation on worst-case instances.


In this paper we  go beyond worst-case analysis and
investigate the equilibrium computation problem in
a natural class of bimatrix games that are stable to perturbations. 
As we argue, on one hand, such games 
 can be used to model many realistic situations. On the other hand,  we show that they have additional structure  which can be exploited to
 provide better algorithmic guarantees than what is believed to be possible on worst-case instances.
The starting point of our work is the realization  that games are typically only abstractions of reality and except in the most
controlled settings, payoffs listed in a game that represents an
interaction between self-interested agents are only approximations to
the agents' exact utilities. \footnote{For example, if agents are two corporations with various possible
  actions in some proposed market, the precise payoffs to the
  corporations may depend on specific quantities such as demand for
  electricity or the price of oil, which cannot be fully known in advance but only estimated.
%
}
  As a result, for problems such as
equilibrium computation, it is natural to focus attention to games
that are robust to the exact payoff values, in the sense that small
changes to the entries in the game matrices do not cause the Nash
equilibria to fluctuate wildly; otherwise, even if equilibria can be
computed, they may not actually be meaningful for
understanding behavior in the game that is played.
In this work, we focus on such games and analyze their structural
properties as well as their implications to the equilibrium computation
problem. We show how their structure can be leveraged to
obtain better algorithms for computing approximate Nash equilibria,
as
well as strategies close to true Nash equilibria.  Furthermore, we provide such algorithmic guarantees
 even if we are given only a perturbation of the actual stable game being played.

To formalize such settings we consider bimatrix games $G$ that satisfy what we call the $(\epsilon, \Delta)$ {\em
perturbation stability} condition,  meaning that
for any game $G'$ within $L_\infty$ distance $\epsilon$ of
$G$ (each entry changed by at most $\epsilon$), 
 each  Nash
equilibrium $(p',q')$ in $G'$ is $\Delta$-close to some
 Nash
equilibrium $(p,q)$ in $G$, where closeness is given by variation distance.
Clearly, any game is $(\epsilon, 1)$  perturbation stable for any $\epsilon$ and the smaller the $\Delta$ the more structure
the $(\epsilon, \Delta)$ perturbation stable games have. In this paper we study the meaningful range of parameters,
 several structural properties, and the algorithmic behavior of these games.



Our first main result shows  that for an interesting and general range of parameters
the structure  of  perturbation stable  games can be leveraged to
obtain better algorithms for equilibrium computation. 
Specifically, we show that all $n$-action $(\epsapx,\epsdist)$
perturbation-stable
games with at most $n^{O((\Delta/\epsilon)^2)}$ Nash equilibria
must
have a well-supported $\epsapx$-equilibrium of support
$O(\frac{\epsdist^{2}\logdel}{\epsapx^2}\log n)$. This yields an
 $n^{O(\frac{\epsdist^2\logdel}{\epsapx^2} \log
n)}$-time algorithm for finding an $\epsilon$-equilibrium, improving
by a factor $O(1/\epsdist^2\logdel)$ in the exponent over the
bound of \cite{LMM03} for games satisfying this condition (and
reducing to the bound of \cite{LMM03} in the worst-case when
$\epsdist=1$).\footnote{One should think of  $\Delta$ as a function of
$\epsilon$, with both possibly depending on $n$.  E.g.,
$\epsilon = 1/\sqrt{n}$ and $\Delta = 6\epsilon$.}  Moreover,
the stability condition can be further used to show that the
approximate equilibrium found will
be $\Delta$-close to a true equilibrium, and this holds  even if
the algorithm is given as input only a perturbation to the true underlying stable game.

A particularly interesting class of games for which our results provide a
dramatic improvement are those that satisfy what we call {\em
$t$-uniform stability to perturbations}.  These are games that for
some $\epsilon_0 = 1/\poly(n)$ and some $t$ satisfy the $(\epsilon,
t\epsilon)$ stability to perturbations condition for all $\epsilon <
\epsilon_0$.
For games satisfying $t$-uniform stability to perturbations with
$t=\poly(\log(n))$, our results imply that we can find
$1/\poly(n)$-approximate equilibria in $n^{\poly(\log(n))}$ time,
i.e., achieve a fully quasi-polynomial-time approximation scheme
(FQPTAS).  This is especially interesting because the results of~\cite{cdt09} prove that it is PPAD-hard to find
$1/\poly(n)$-approximate equlibria in general games.  Our results shows that under the
(widely believed) assumption that PPAD is not contained in
quasi-polynomial time \cite{dask11},  such uniformly stable game are inherently easier for computation of approximate equilibria
than general bimatrix games.\footnote{The generic result of ~\cite{LMM03} achieves
quasi-polynomial time only for $\epsilon = \Omega(1/\poly(\log n))$.}
Moreover, variants of many games appearing commonly in experimental economics including the public goods game, matching pennies, and identical interest game~\cite{gtbook1} 
satisfy this condition. See Sections~\ref{stable},~\ref{lowerbounds}, and Appendix~\ref{simple-examples} for detailed examples.


Our second main result shows that
 computing an $\epsilon$-equilibrium in a game satisfying  the  $(\epsilon,\Theta(\epsilon^{1/4}))$ perturbation stability condition
 is as hard as computing a $\Theta(\epsilon^{1/4})$-equilibrium in a
general game.
 For our reduction, we show that any general game can be embedded into one having the $(\epsilon,\Theta(\epsilon^{1/4}))$ perturbation stability property such that an $\epsilon$ equilibrium in the new game yields a
 $\Theta(\epsilon^{1/4})$-equilibrium in the original game. This result implies that the interesting range for the  $(\epsapx,\epsdist)$-perturbation stability condition, where one could hope to do significantly better than in the general case,
 is $\epsdist=o(\epsapx^{1/4})$.

We also connect our stability to perturbations condition to a seemingly very different stability to approximations condition introduced by Awasthi et al.~\cite{nash10}. Formally, a game satisfies the   strong $(\epsapx,\epsdist)$-approximation stability condition if all $\epsapx$-approximate
equilibria are contained inside a small ball of radius $\epsdist$
around a single equilibrium.\footnote{~\cite{nash10} argue that this condition is interesting since in situations where
one would want to use an approximate Nash equilibrium for {\em predicting} how players will play (which is a common  motivation
for computing a Nash or an approximate Nash equilibrium), without such
a condition the approximate equilibrium found might
be far from the equilibrium played.}
 We prove that our stability to perturbations condition is equivalent
   to a much weaker version of this condition that we call the
    well-supported approximation stability.  This condition requires
    only that for any well-supported $\epsilon$-approximate equilibrium~\footnote{Recall that
    in an $\epsilon$-Nash equilibrium, the expected payoff of each
player is within $\epsilon$ from her best response
payoff; however the mixed strategies may include poorly-performing
pure strategies in their support. By contrast, the support of a
     well-supported $\epsilon$-approximate equilibrium may only
contain strategies whose payoffs fall within $\epsilon$ of the
player's best-response payoff.}
$(p,q)$ there {\em exists} a Nash
equilibrium $(p^*,q^*)$ that is $\Delta$-close to
$(p,q)$.  Clearly, the well supported  approximation stability
condition is 
more
general  than strong $(\epsapx,\epsdist)$-approximation stability considered by~\cite{nash10}
since rather than assuming that there exists a
fixed Nash equilibrium $(p^*,q^*)$ such that all
$\epsilon$-approximate equilibria are contained in a ball of radius
$\Delta$ around $(p^*,q^*)$, it requires only that for any
well-supported $\epsilon$-approximate equilibrium $(p,q)$ there exists a Nash
equilibrium $(p^*,q^*)$ that is $\Delta$-close to
$(p,q)$.  Thus, perturbation-stable games are significantly more
expressive than strongly approximation stable games and  Section~\ref{stable} presents several examples of games satisfying
the former but not the latter.
However, our lower bound (showing that the interesting  range of parameters is $\epsdist=O(\epsapx^{1/4})$) also holds for the
strong stability to approximations condition.

We also provide an interesting structural result showing that each
Nash equilibrium
of an $(\epsilon,\Delta)$ perturbation stable game with
$n^{O((\Delta/\epsilon)^2)}$ Nash equilibria
must be $8\Delta$-close to well-supported
$\epsilon$-approximate equilibrium of support only
$O(\frac{\epsdist^{2}\logdel}{\epsapx^2}\log n)$.  Similarly, a
$t$-uniformly stable game with $n^{O(t^2)}$ equilibria  has the
property that for any $\Delta$, each equilibrium is $8\Delta$-close to
a well-supported $\Delta/t$-approximate equilibrium with support of
size $O(t^2\logdel\log n)$.  This property implies that in
quasi-polynomial time we can in fact find a {\em set} of
approximate-Nash equilibria that cover (within distance $8\Delta$) the
set of {\em all} Nash equilibria in such games.

It is interesting to note that for 
our algorithmic results  for finding approximate equilibria we do
{\em not} require  knowing the stability parameters. 
If the game happens to be reasonably stable, then we get improved running times over the Lipton et al.~\cite{LMM03} guarantees; if this is not the case, then we fall back to the  Lipton et al.~\cite{LMM03} guarantees.\footnote{This is because algorithmically, we can simply try different support sizes in increasing order and stop when we find strategies forming a (well-supported) $\epsilon$-equilibrium. In other words, given $\epsilon$, the desired approximation level, we can find an $\epsilon$-approximate equilibrium in time  $n^{O(\frac{\epsdist^2\logdel}{\epsapx^2} \log
n)}$ where $\Delta$ is the smallest value such that the game is $(\epsilon,\Delta)$ perturbation stable. 
}
However, given a game, it might be interesting to know how stable it is.
In this direction, we provide a characterization of stable constant-sum games in Section~\ref{zero-sum} and
an algorithm for computing the strong stability parameters of a given
constant-sum game.

\subsection{Related Work}
In addition to results on computing (approximate) equilibria in worst-case instances of general bimatrix games,
 there has also been a series of results 
on polynomial-time algorithms for computing (approximate) equilibria in
specific classes of bimatrix games.
For example, B{\'a}r{\'a}ny et al.~\cite{BVV07} considered two-player games with
randomly chosen payoff matrices, and showed that with
high probability, such games have Nash equilibria with
small support. Their result implies that in random
two-player games, Nash equilibria can be computed in expected
polynomial time. Kannan and  Theobald~\cite{kt10} provide an FPTAS for the case where
the sum of the two payoff matrices has constant rank and Adsul et al.~\cite{ruta11} provide a polynomial time algorithm for computing an exact Nash equilibrium
of a rank-1 bimatrix game.

Awasthi et al.~\cite{nash10}  analyzed the question of
finding an approximate Nash in equilibrium in games that satisfy 
stability with respect to approximation.
However, their condition is quite restrictive in that it focuses only on games that have the property that all the Nash equilibria are close together, thus eliminating from consideration most common games. By contrast, our perturbation stability notion, which (as mentioned above) can be shown to be a generalization of their notion, captures many more realistic situations. Our upper bounds on approximate equilibria can be viewed as generalizing the corresponding result of~\cite{nash10} and it is  significantly more challenging technically. Moreover, our lower bounds
also apply to the stability notion of~\cite{nash10} and provide  the  first (nontrivial) results about the interesting range of parameters for that stability notion as well.

In a very different context, for clustering problems, Bilu and Linial~\cite{BL} analyze maxcut clustering instances with the property that if the distances are perturbed by a multiplicative factor of $\alpha$, then the optimum does not change; they show that  under this condition, for $\alpha=\sqrt{n}$, one can find the optimum solution in polynomial time. Recently, Awasthi et al.~\cite{ABS11}, have shown a similar result for
the k-median clustering problem and showed a similar result for $\alpha=\sqrt{3}$.   Our stability to perturbations notion is inspired by this work, but is substantially less restrictive in two respects.  First, we require
  stability only to small perturbations in the input, and second, we
  do not require the solutions (Nash equilibria) to stay fixed under
  perturbation, but rather just ask that they have a bounded degree of
  movement.

The notion of stability to perturbations we consider in our paper is also related to the stability notions considered by Lipton et al.~\cite{Vangelis} for economic solution concepts.  The main focus of their work was understanding whether for a given solution
concept or optimization
problem {\em all} instances are stable. In this paper, our main focus is on understanding how rich the class
of stable instances is, and what properties one can determine about
their structure that can be leveraged to get better algorithms for computing approximate Nash equilibria.\footnote{Just as in~\cite{Vangelis}, one can show that for the
stability conditions we consider in our paper, there exist unstable instances.}
We provide the first results showing better algorithms for computing approximate equilibria in such games.


\section{Preliminaries}
\label{defs}
We consider 2-player general-sum $n$-action bimatrix games.  Let $R$
denote the payoff matrix to the row player and $C$ denote the payoff
matrix of the column player.   If the row player chooses
strategy $i$ and the column player chooses strategy $j$,
the payoffs are $R_{i,j}$ and $C_{i,j}$ respectively.
We assume all payoffs are scaled to the
range $[0,1]$.

A mixed strategy for a player is a probability distribution over the set of his pure strategies.
The $i$th pure strategy will be represented by the unit vector $e_i$, that has $1$ in the $i$th coordinate and $0$ elsewhere.
For a mixed strategy pair $(p,q)$, the payoff to the row player is the expected value of a random
variable which is equal to $R_{i,j}$ with probability $p_iq_j$. Therefore the payoff to the row player is
$p^T R q$. Similarly the payoff to the column player is $p^TC q$. Given strategies $p$ and $q$  for the row and column player,
we denote by $\supp(p)$ and $\supp(q)$ the support
of $p$ and $q$, respectively.

A Nash equilibrium~\cite{nash51} is a pair of strategies $(p^*, q^*)$ such that no player has an incentive
to deviate unilaterally. Since mixed strategies are convex combinations of pure strategies, it
suffices to consider only deviations to pure strategies.  In particular,  a pair of mixed strategies $(p^*,q^*)$ is a
Nash-equilibrium if for every pure strategy $i$ of the row player we have
 $e_i^T R q^* \leq {p^*}^TR q^*$, and for every pure strategy $j$ of the column player we have
 ${p^*}^TC e_j \leq {p^*}^TCq^*$.
Note that in a Nash
equilibrium  $(p^*,q^*)$, all rows $i$ in the support of $p^*$ satisfy
$e_i^TRq^* = {p^*}^T Rq^*$ and similarly all columns $j$ in the support
of $q^*$ satisfy ${p^*}^TCe_j = {p^*}^TCq^*$.

\begin{definition}
A pair of mixed strategies $(p,q)$ is an
$\epsilon$-equilibrium if both players have no more than $\epsilon$
incentive to deviate.  Formally, $(p,q)$ is an $\epsilon$-equilibrium
if for all rows $i$, we have $e_i^TRq \leq p^TRq+\epsilon$, and for all
columns $j$, we have $p^TCe_j \leq p^TCq + \epsilon$.
\end{definition}

\begin{definition}
A pair of mixed strategies $(p,q)$ is a well supported
$\epsilon$-equilibrium if for any $i \in \supp(p)$ (i.e., $i$ s. t. $p_i >0$) we have
$e_i^TRq \geq e_j^T Rq - \epsilon$, for  all $j$; similarly, for any $i \in \supp(q)$ (i.e., $i$ s. t. $q_i >0$) we have
$p^T C e_i \geq p^T C e_j - \epsilon$, for  all $j$.
\end{definition}

\begin{definition}
We say that a bimatrix game $G'$ specified by $R', C'$ is an $L_\infty$ $\alpha$-perturbation of $G$ specified by $R, C$ if we have
$|R_{i,j} - R'_{i,j}| \leq \alpha$ and
$|C_{i,j} - C'_{i,j}| \leq \alpha$ for all $i,j \in \{1, \ldots, n\}$.
\end{definition}

\begin{definition}
For two probability distributions  $q$ and $q'$, we define the distance between $q$ and $q'$ as the variation distance:
$$d(q,q') = \frac{1}{2}\sum_i |q_i-q'_i| = \sum_i \max(q_i-q'_i,0) =
\sum_i \max(q'_i-q_i,0).$$
We  define the distance between two strategy pairs as the maximum
of the row-player's and column-player's distances, that is:
$d((p,q),(p',q')) = \max[d(p,p'), d(q,q')].$
\end{definition}
It is easy to see that $d$ is a metric.
If $d((p,q),(p',q')) \leq \Delta$, then we say that $(p',q')$ is $\Delta$-close to $(p,q)$.

Throughout this paper we use ``log'' to mean log-base-e.

\section{Stable Games}
\label{stable}
The main notion of stability we introduce and study in this paper requires that any Nash equilibrium in a perturbed game be close to a Nash equilibrium in the original game. This is an especially motivated condition since in many real world situations the  entries of the game we analyze are merely based
on measurements and thus  only approximately reflect the players' payoffs.
In order to be useful for prediction, we would like that equilibria in the game we operate with  be close to  equilibria in the real game played by the players. Otherwise, in games where  certain equilibria of slightly perturbed games are far from all equilibria in the original game,  the analysis of behavior (or prediction) will be meaningless.
 Formally:

\begin{definition}
\label{stab3}
A game $G$ satisfies the $(\epspert, \epsdist)$  stability to perturbations condition  if for any $G'$ that is an
$L_\infty$ $\epspert$-perturbation of $G$ and for any Nash
equilibrium $(p,q)$ in $G'$, there exists a Nash equilibrium
$(p^*,q^*)$ in $G$ such that $(p,q)$ is $\Delta$-close
to $(p^*,q^*)$.
\footnote{Note that the entries of the perturbed game are not restricted to the $[0,1]$ interval, and are allowed to belong to $[-\epspert, 1+ \epspert]$. This is a proper way to
formulate the notion because it implies, for instance, that
if $G$ is $(\epsilon,\Delta)$ stable to perturbations, then for any
$\alpha>0$, $\alpha G$ is $(\alpha\epsilon,\Delta)$ stable to
perturbations. Theorem~\ref{stab3eqstab2} provides further evidence that this definition is proper.}

%
\end{definition}


Observe that fixing $\epsapx$, a smaller $\epsdist$ means a stronger condition
and a larger $\epsdist$ means a weaker condition.  Every game is
$(\epsapx,1)$-perturbation stable, 
and as $\epsdist$ gets smaller, we
might expect for the game to exhibit more useful structure.

Another stability condition we consider in this work is stability to approximations:

\begin{definition}
\label{stab1}
A game satisfies  the $(\epsapx, \epsdist)$-approximation
stability condition if for any $\epsapx$-equilibrium  $(p,q)$  there
exists a Nash equilibrium $(p^*,q^*)$ such that $(p,q)$  is $\epsdist$-close to $(p^*,q^*)$.

A game satisfies the well supported $(\epsapx, \epsdist)$-approximation
stability condition if  for any  well supported $\epsapx$-equilibrium $(p,q)$ there exists a Nash equilibrium $(p^*,q^*)$ such that
$(p,q)$ is $\epsdist$-close to $(p^*,q^*)$.
\end{definition}

Clearly, if a game satisfies the  $(\epsapx, \epsdist)$-approximation
stability condition, then it also satisfies the well supported $(\epsapx, \epsdist)$-approximation
stability condition.  Interestingly, we show that the stability to perturbations  condition is equivalent to the
well supported approximation stability condition.  Specifically:

\begin{theorem}
\label{stab3eqstab2}
A game satisfies the well supported $(2\epsapx, \epsdist)$-approximation stability condition if and only if it satisfies the $(\epsapx, \epsdist)$-stability to perturbations  condition.
\end{theorem}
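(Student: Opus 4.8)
**The plan is to prove the two implications separately, using the connection between perturbations of payoff entries and relaxations of the best-response conditions.**

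\textbf{Direction 1: perturbation stability $\Rightarrow$ well-supported approximation stability.} Suppose $G$ is $(\epsapx,\epsdist)$-stable to perturbations, and let $(p,q)$ be a well-supported $2\epsapx$-equilibrium of $G$. I would construct an $L_\infty$ $\epsapx$-perturbation $G'=(R',C')$ of $G$ in which $(p,q)$ is an \emph{exact} Nash equilibrium. The idea is to shift payoffs by $\pm\epsapx$ to ``flatten'' the strategies in the support: for a row $i$, decrease the entries of row $i$ of $R$ by $\epsapx$ when $i\in\supp(p)$ (making those rows less attractive) and increase them by $\epsapx$ when $i\notin\supp(p)$ (making unused rows more attractive), and symmetrically for columns of $C$. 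Since $(p,q)$ is well-supported $2\epsapx$, every $i\in\supp(p)$ has $e_i^TRq \ge e_j^TRq - 2\epsapx$ for all $j$; after the shift, $e_i^T R' q = e_i^T R q - \epsapx$ for in-support $i$ and $e_j^T R' q = e_j^T R q + \epsapx$ for out-of-support $j$, so every in-support row now weakly dominates every out-of-support row under $q$, and all in-support rows are within $0$... here I need to be a bit careful: flattening alone makes in-support rows pairwise equal only if they were exactly equal. In fact in a well-supported equilibrium all in-support rows need \emph{not} be equal, only within $2\epsapx$; so the cleaner construction is, for each in-support $i$, set $R'_{i,j}=R_{i,j}-(e_i^TRq - v)$ where $v=\min_{i'\in\supp(p)} e_{i'}^TRq$, adjusted so every perturbation is at most $\epsapx$ — and push out-of-support rows down to value $\le v$. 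One checks each change is in $[-\epsapx,\epsapx]$ using the well-supported $2\epsapx$ bound. Then $(p,q)$ is an exact Nash equilibrium of $G'$, so by perturbation stability there is a Nash equilibrium $(p^*,q^*)$ of $G$ with $d((p,q),(p^*,q^*))\le\epsdist$, which is exactly what well-supported $(2\epsapx,\epsdist)$-approximation stability asserts.

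\textbf{Direction 2: well-supported approximation stability $\Rightarrow$ perturbation stability.} Suppose $G$ satisfies well-supported $(2\epsapx,\epsdist)$-approximation stability. Let $G'$ be an $L_\infty$ $\epsapx$-perturbation of $G$ and let $(p,q)$ be a Nash equilibrium of $G'$. I want to show $(p,q)$ is a well-supported $2\epsapx$-equilibrium of $G$; then the hypothesis immediately gives the required nearby Nash equilibrium of $G$. For any $i\in\supp(p)$ and any row $j$, we have $e_i^T R' q \ge e_j^T R' q$ (in fact $i$ achieves the max, since $(p,q)$ is an exact equilibrium of $G'$ and $i$ is in the support). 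Since $|R_{k,\ell}-R'_{k,\ell}|\le\epsapx$ entrywise, $|e_k^T R q - e_k^T R' q| = |\sum_\ell q_\ell (R_{k,\ell}-R'_{k,\ell})| \le \epsapx$ for every row $k$. Hence $e_i^T R q \ge e_i^T R' q - \epsapx \ge e_j^T R' q - \epsapx \ge e_j^T R q - 2\epsapx$, so the in-support rows satisfy the well-supported $2\epsapx$ condition in $G$; the column-player argument is identical with $C,C'$. Thus $(p,q)$ is a well-supported $2\epsapx$-equilibrium of $G$, and applying the approximation-stability hypothesis finishes this direction.

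\textbf{Main obstacle.} Direction 2 is essentially a one-line $L_\infty$-to-payoff-slack estimate and should be routine. The real work is Direction 1: one must exhibit a \emph{single} perturbation $G'$, of magnitude at most $\epsapx$ in every entry, that simultaneously makes \emph{all} in-support strategies exactly best responses and \emph{all} out-of-support strategies no better — doing this correctly requires the factor of $2$ (one $\epsapx$ to pull the best in-support value down to the common level, one $\epsapx$ to push every inferior/out-of-support row up, with the well-supported $2\epsapx$ gap guaranteeing these two moves meet in the middle). I would also double-check the boundary bookkeeping for strategies $j$ that are out of support but already have payoff close to $v$, so that no entry is moved by more than $\epsapx$, and note that the perturbed entries are allowed to leave $[0,1]$ (lying in $[-\epsapx,1+\epsapx]$), consistent with the footnote to Definition~\ref{stab3}.
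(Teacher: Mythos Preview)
Your approach matches the paper's: your Direction~2 is verbatim the paper's forward argument, and your Direction~1 is the paper's reverse argument (build an $\epsapx$-perturbation in which the given well-supported $2\epsapx$-equilibrium becomes an exact Nash, then invoke perturbation stability).

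The one place that needs cleaning up is the choice of target level $v$ in Direction~1. Taking $v=\min_{i'\in\supp(p)}e_{i'}^TRq$ does not work in general: an in-support row with payoff equal to $\max_j e_j^TRq$ would then require a shift of size $\max_j e_j^TRq-\min_{i'\in\supp(p)}e_{i'}^TRq$, which the well-supported condition only bounds by $2\epsapx$, not $\epsapx$. The correct target---and this is precisely your ``meet in the middle'' intuition from the obstacle paragraph---is $v=\max_{j}e_{j}^TRq-\epsapx$. With that choice, every in-support row $i$ has $e_i^TRq\in[\max-2\epsapx,\max]$ and so needs a shift in $[-\epsapx,\epsapx]$ to land exactly at $v$; every out-of-support row $j$ has $e_j^TRq\le\max$ and so needs a shift in $[-\epsapx,0]$ to be brought to $\le v$. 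In particular, out-of-support rows are pushed \emph{down} (or left alone), not up as your first pass and the phrase ``push every inferior/out-of-support row up'' suggest. With this single correction your construction is exactly the paper's, and the proof is complete.
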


\begin{proof}
Consider an $n \times n$ bimatrix game specified by $R$ and $C$ and assume  it satisfies the well supported $(2\epsapx, \epsdist)$-approximation stability condition;
we show it also satisfies the $(\epsapx, \epsdist)$-stability to perturbations  condition.
 Consider $R'=R+\Gamma$ and $C'=C+\Lambda$, where  $|\Gamma_{i,j}|\leq \epsilon$ and $|\Lambda_{i,j}|\leq \epsilon$, for all $i,j$.
Let $(p,q)$  be an arbitrary  Nash equilibrium in the new game specified by $R'$ and $C'$.
  We will show that $(p,q)$ is a well supported $2\epsapx$-approximate Nash equilibrium in the original game specified by $R$ and $C$.
To see this, note that by definition, (since $(p,q)$ is a Nash equilibrium in the game specified by $R'$ and $C'$) we have
$e_j^T R' q \leq  p^T R' q \equiv v_R$ for all $j$; therefore
$e_j^T R q + e_j^T \Gamma q \leq  v_R $, so $e_j ^T R q \leq  v_R + \epsapx$, for all $j$. On the other hand we also have
$e_i^T R q =  e_i^T R' q  -  e_i^T \Gamma q \geq v_R - \epsapx$ for all $i \in \supp(p)$.
Therefore, $e_i^T R q \geq e_j^T R q - 2\epsapx$, for all $i \in \supp(p)$ and for all $j$.
Similarly we can show $p^T C e_i \geq p^T C e_j - 2\epsapx$, for all $i \in \supp(q)$ and for all $j$.
This implies that  $(p,q)$ is well supported $2 \epsapx$-approximate
Nash in the original game, and so by assumption
is $\epsdist$-close to a  Nash equilibrium of the game
  specified by $R$ and $C$.  So, this game satisfies the $(\epsapx,
\epsdist)$-stability to perturbations  condition.

In the reverse direction, consider an $n \times n$ bimatrix game specified by $R$ and $C$ and assume it satisfies the  $(\epsapx, \epsdist)$-stability to perturbations  condition.
Let $(p,q)$  be an arbitrary well supported $2\epsapx$ Nash equilibrium in this game. Let us define matrices $R'$ and $C'$ such that
$e_i^T R' q =\max_{i'} {e_{i'}}^T R q-\epsilon$ for all $i \in \supp(p)$ and $e_i^T R' q \leq \max_{i'} e_{i'}^T R q-\epsilon$  for all $i \notin \supp(p)$, $p^T C' e_j =\max_{j'} p^T C e_{j'}-\epsilon$ for all $j \in \supp(q)$ and  $p^T C' e_{j'} \leq \max_{j} p^T C e_{j'} -\epsilon$ for all $j \notin \supp(q)$.
Since  $(p,q)$  is a well supported $2\epsapx$ Nash equilibrium we know this can be done such that $|(R'-R)_{i,j}| \leq \epsilon$ and
$|(C'-C)_{i,j}| \leq \epsilon$, for all $i,j$ (in particular, we have to add quantities in $ [-\epsilon,\epsilon]$ to all
 the elements in rows $i$ of $R$ in the support of $p$ and subtract quantities in $[0,\epsilon]$ from all
 the elements in rows $i$ of $R$ not in the support of $p$; similarly for $q$).  By design, $(p,q)$ is a Nash equilibrium in the game defined by $R'$, $C'$, and from the
$(\epsapx, \epsdist)$-stability to perturbations  condition, we obtain that it is $\Delta$-close to a true Nash equilibrium of the game specified by $R$ and $C$.
Thus, any  well supported $2\epsapx$ Nash equilibrium in the game specified by $R$ and $C$ is $\Delta$-close to a true Nash equilibrium of this game, as desired.
\end{proof}


One can show that the well supported approximation stability is a strict relaxation of the approximation stability condition.
For example,  consider the  $2 \times 2$  bimatrix game 
$$R = \left[\begin{array}{cc}
           1 & 1 \\

           1-\epsilon_0 & 1-\epsilon_0
\end{array}\right]\hspace{5mm}
C = \left[\begin{array}{cc}
           1 & 1-\epsilon_0 \\
           1 & 1-\epsilon_0
\end{array}\right]
$$

For $ \epsapx< \epsilon_0$ this game satisfies the well supported
$(\epsapx, 0)$-approximation stability condition, but does not satisfy
$(\epsapx, \Delta)$-approximation stability for any $\Delta <
\epsapx/\epsilon_0$.
To see this note that $e_1^T R q=1$,  $e_2^T R q=1-\epsilon_0$,  $p^T C e_1=1$, and $p^T C e_2=1-\epsilon_0$ for any $p$ and $q$. This implies that the
only well supported $\epsilon$-Nash equilibrium is identical to the Nash equilibrium $(1,0)^T, (1,0)^T$,
thus the game is well supported $(\epsapx,0)$-approximation stable. 
On the other hand, the pair of mixed strategies $(p,q)$ with $p=(1-\epsapx/\epsilon_0, \epsapx/\epsilon_0)^T$
and $q=(1-\epsapx/\epsilon_0, \epsapx/\epsilon_0)^T$ is an $\epsilon$-Nash equilibrium.
  The  distance between $(p,q)$ and the unique Nash is
$\epsapx/\epsilon_0$, thus  this game is
not $(\epsapx,\Delta)$-approximation stable for any
$\Delta<\epsapx/\epsilon_0$.

Interestingly, the notion of approximation stability which is a 
restriction of the well-supported approximation stability and of stability to perturbations  conditions is a relaxation of the stability condition considered by Awasthi et al.~\cite{nash10} which requires that all approximate equilibria  be contained in a ball of radius $\Delta$ around a single Nash   equilibrium. In this direction, we define the {\em strong version} of stability
  conditions given in Definitions 3 and 4 to be a reversal of
  quantifiers that asks there be a single $(p^*,q^*)$ such that each
  relevant $(p,q)$ (equilibrium in an $\epsilon$-perturbed game,
  $\epsilon$-approximate equilibrium, or well-supported
  $\epsilon$-approximate equilibrium) is $\Delta$-close to $(p^*,q^*)$. Formally:

\begin{definition}
\label{stab3}
A game $G$  satisfies the strong $(\epspert, \epsdist)$  stability to perturbations condition  if there exists $(p^*,q^*)$  a Nash equilibrium  of $G$ such that
for any $G'$ that is an
$L_\infty$ $\epsilon$-perturbation of $G$ we have that any Nash
equilibrium in $G'$ is $\Delta$-close
to $(p^*,q^*)$.

A game $G$ satisfies  the strong (well supported) $(\epsapx, \epsdist)$-approximation
stability condition if there exists $(p^*,q^*)$  a Nash equilibrium  of $G$  such that  any (well supported) $\epsapx$-equilibrium  $(p,q)$  is $\epsdist$-close to $(p^*,q^*)$.
\end{definition}

It is immediate from its proof that Theorem~\ref{stab3eqstab2} applies to the
strong versions of the definitions as well. We also note that our generic upper bounds in Section~\ref{stable-upper} will apply to the most
  relaxed version (perturbation-stability) and  our generic lower bound in Section~\ref{lowerbounds}
  will be apply to the most stringent version (strong approx stability).

%
 \paragraph{Range of parameters} As shown in~\cite{nash10},  if a game $\game$ satisfies the strong $(\epsapx,
\epsdist)$-approximation stability  condition and has a non-trivial Nash
equilibrium (an equilibrium in which the players do not both have full support), then we must have  $ \epsdist \geq  \epsapx$.
We can show that if a game $\game$ satisfies the $(\epsapx,
\epsdist)$-approximation stability and if the union of all $\Delta$-balls around all Nash equilibria does not cover the whole space,\footnote{If the union of all $\Delta$-balls around all Nash equilibria does  cover the whole space, this is an easy case from our perspective. Any $(p,q)$ would be a $\epsilon$-equilibria.} then we must have  $ 3\epsdist \geq  \epsapx$ -- see Lemma~\ref{parameters} in Appendix~\ref{proofs}.
In Section~\ref{lowerbounds} we further discuss the meaningful range of parameters from the point of view of the equilibrium computation problem.

\paragraph{Examples}
 Variants of  many classic games including the public goods game, matching pennies, and identical interest games are stable.
As an example, consider the following modified identical interest game. Both players have  $n$ available actions. The first action is to stay home, and the other actions  correspond to $n-1$ different possible meeting locations. If a player chooses action $1$ (stay home), his payoff is $1/2$ no matter what the other player is doing. If the player chooses to go out to a meeting location, his payoff is $1$ if the other player is there as well and  it is $0$ otherwise. This game has  $n$ pure equilibria (all $(e_i,e_i)$) and ${n \choose 2}$ equilibria (all $(1/2 e_i + 1/2 e_j, 1/2 e_i + 1/2 e_j)$)
and it is well-supported $(\epsilon,2\epsilon)$-approximation
   stable for all $\epsilon < 1/6$.  Note that it does not satisfy
   strong  (well-supported)  stability because it has multiple very distinct equilibria.
For further examples see Lemma~\ref{basic-game-stable} in Section~\ref{lowerbounds}, as well as Appendix~\ref{simple-examples}.


\section{Equilibria in Stable Games}
\label{stable-upper}
In this section we show we can leverage the structure implied by stability to perturbations  to improve over the best known generic bound of \cite{LMM03}.
We start by considering $\epsilon$ and $\Delta$ as given.
We can show:


\begin{theorem}\label{thm:weak:small-support}
Let us fix $\epsilon$ and $\Delta$, $0\leq \epsilon \leq \Delta \leq 1$.
Consider a game with at most $n^{O((\epsdist/\epsapx)^2)}$ Nash equilibria which satisfies  the well supported $(\epsapx, \epsdist)$-approximation stability condition (or the  $(\epsapx/2, \epsdist)$-stability to perturbations  condition).
Then
there exists a well supported $\epsapx$-equilibrium  where each player's strategy has
support $O((\epsdist/\epsapx)^2\logdel\log n)$.
\end{theorem}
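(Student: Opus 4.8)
The plan is to mimic the sampling argument of Lipton--Markakis--Mehta~\cite{LMM03} but to exploit the stability hypothesis to reduce the support size needed. Recall that the LMM approach takes an exact Nash equilibrium $(p^*,q^*)$, draws $k = O((\log n)/\epsilon^2)$ i.i.d.\ samples from each of $p^*$ and $q^*$, and forms the empirical distributions $\hat p,\hat q$; by a Chernoff/Hoeffding bound together with a union bound over the $n$ pure deviations of each player, the pair $(\hat p,\hat q)$ is with positive probability an $\epsilon$-approximate (indeed $O(\epsilon)$-well-supported) equilibrium. Here the point is that the stability condition lets us tolerate a coarser sampling error: if the empirical payoffs are within $\Theta(\Delta)$ of the true equilibrium payoffs rather than within $\epsilon$, the sampled strategies need only form a well-supported $\Theta(\Delta)$-approximate equilibrium, which by stability lies $\Delta$-close to some genuine Nash equilibrium. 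This is the leverage that turns the $1/\epsilon^2$ in the LMM exponent into $(\Delta/\epsilon)^2$.

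Concretely, I would proceed as follows. First, fix any exact Nash equilibrium $(p^*,q^*)$ of the game. Sample $k = O((\Delta/\epsilon)^2 \log n)$ pure strategies from each of $p^*,q^*$ independently, obtaining empirical $\hat p,\hat q$; a Hoeffding bound gives that with positive probability, for every pure row $i$ we have $|e_i^T R\hat q - e_i^T R q^*| \le c\Delta$ and similarly for the column player and for the equilibrium value, simultaneously (union bound over $O(n)$ events). Thus $(\hat p,\hat q)$ is a well-supported $O(\Delta)$-approximate equilibrium whose support has size at most $k$. Now invoke the stability hypothesis (via Theorem~\ref{stab3eqstab2}, translating $(\epsilon/2,\Delta)$-perturbation stability into well-supported $(\epsilon,\Delta)$-approximation stability, or using the well-supported approximation-stability hypothesis directly): there is a Nash equilibrium $(\tilde p,\tilde q)$ with $d((\hat p,\hat q),(\tilde p,\tilde q)) \le \Delta$. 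We do \emph{not} output $(\hat p,\hat q)$ itself; instead we will re-sample near $(\tilde p,\tilde q)$.

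The key refinement — and where the $\logdel = \log(1+\Delta^{-1})$ factor enters, and where I expect the main difficulty to lie — is a second, finer sampling step restricted to the $\Delta$-ball around $(\tilde p,\tilde q)$, analogous to the argument in~\cite{nash10}. The idea: since $(\tilde p,\tilde q)$ is a genuine equilibrium and all strategies within distance $\Delta$ of it are, by stability, close to equilibria, we only need to approximate $(\tilde p,\tilde q)$ on the coordinates where it has mass bounded below by roughly $\Delta/n$ — smaller coordinates contribute total mass $O(\Delta)$ and can be ignored up to moving into the $\Delta$-ball — and within that effective support the empirical average of $O((\Delta/\epsilon)^2 \logdel \log n)$ samples is, with positive probability, an $\epsilon$-well-supported equilibrium. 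The $\logdel$ factor arises from a bucketing of coordinates by magnitude (there are $O(\logdel)$ relevant scales between $\Delta/n$ and $1$) with a union bound across buckets. The delicate points are: controlling how much the payoff functions move as we pass from $(\hat p,\hat q)$ to $(\tilde p,\tilde q)$ and then to the second empirical pair (each step costing $O(\Delta)$ in payoff, so constants must be chosen so the total stays below $\epsilon$ — this is fine since $\epsilon \le \Delta$, with appropriate absolute constants); verifying that truncating the low-mass coordinates of $(\tilde p,\tilde q)$ still yields a point within $O(\Delta)$ of an equilibrium so the stability condition can be reapplied; and handling the bound ``at most $n^{O((\Delta/\epsilon)^2)}$ Nash equilibria'', which is needed so that a union bound over all equilibria $(\tilde p,\tilde q)$ that could arise does not overwhelm the $n^{-O((\Delta/\epsilon)^2)}$ failure probability per equilibrium. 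I would organize the write-up so that this counting hypothesis is used exactly once, at the union-bound step, and otherwise follow the LMM/Awasthi et al.\ template.
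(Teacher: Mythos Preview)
Your proposal has a genuine gap at the very first step, and the subsequent steps inherit it. With $k = O((\Delta/\epsilon)^2\log n)$ samples, Hoeffding plus a union bound over $n$ rows gives payoff accuracy $\Theta(\sqrt{(\log n)/k}) = \Theta(\epsilon/\Delta)$, not $c\Delta$. Under the sole hypothesis $\epsilon \le \Delta \le 1$ the quantity $\epsilon/\Delta$ is in general neither $\le c\Delta$ nor $\le \epsilon$, so $(\hat p,\hat q)$ is only an $O(\epsilon/\Delta)$-well-supported approximate equilibrium. More importantly, you then invoke stability on a $\Theta(\Delta)$-well-supported equilibrium, but the hypothesis is $(\epsilon,\Delta)$-well-supported-approximation stability: it says that every well-supported \emph{$\epsilon$}-equilibrium is $\Delta$-close to some Nash equilibrium, and says nothing about $\Theta(\Delta)$- (or $\Theta(\epsilon/\Delta)$-) well-supported equilibria. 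So you cannot conclude that $(\hat p,\hat q)$ lies near any Nash point, and the ``second, finer sampling step near $(\tilde p,\tilde q)$'' has no anchor. Even granting such a $(\tilde p,\tilde q)$, you would have no structural information about it that lets resampling from it beat the plain LMM bound.

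The idea you are missing is the reverse of what you attempt: instead of sampling and then applying stability, the paper first uses stability to prove a \emph{concentration} property of every Nash equilibrium, and only afterwards samples. One shows that for any Nash equilibrium $(p^*,q^*)$, each of $p^*,q^*$ has all but at most $8\Delta$ of its mass on a ``heavy'' set $H$ of size $O((\Delta/\epsilon)^2\log(1+\Delta^{-1})\log n)$; the $\log(1+\Delta^{-1})$ factor comes from the greedy peeling that builds $H$, not from bucketing a sampling step. Concentration is proved by contradiction via a probabilistic construction (Lemma~\ref{clm:p-small-norm}): if the light part of $p^*$ had mass $>8\Delta$ and small $\ell_2/\ell_1$ ratio, one can randomly reshuffle mass within it to obtain $\tilde p'$ with $d(p^*,\tilde p')=3\Delta$ such that $(\tilde p',q^*)$ is still a well-supported $\epsilon$-equilibrium \emph{and} $\tilde p'$ is $>\Delta$-far from \emph{every} Nash equilibrium of the game --- contradicting stability. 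The bound of at most $n^{O((\Delta/\epsilon)^2)}$ equilibria is used precisely here, as a union bound inside this probabilistic construction so that the random $\tilde p'$ simultaneously avoids all equilibria; it is not, as you suggest, a union bound over ``equilibria that could arise'' from sampling. Once concentration holds, one keeps $p^*|_H$ intact and applies LMM sampling only to the $\le 8\Delta$-mass tail at accuracy $O(\epsilon/\Delta)$; scaling by the tail mass converts this to overall payoff accuracy $O(\epsilon)$ and yields the claimed well-supported $\epsilon$-equilibrium of small support.
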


This improves
by a factor $O(1/(\epsdist^2\logdel))$ in the exponent over the
bound of \cite{LMM03} for games satisfying these conditions (and
reduces to the bound of \cite{LMM03} in the worst-case when
$\epsdist=1$).

\paragraph{Proof idea} We start by  showing that any Nash equilibrium
$(p^*,q^*)$ of $G$ must be highly concentrated.  In particular, we
show that for each of $p^*,q^*$, any portion of the distribution with
substantial $L_1$ norm (having total probability at least $8\Delta$)
must also have high $L_2$ norm: specifically the ratio of $L_2$ norm
to $L_1$ norm must be $\Omega((\epsilon/\Delta)(\log n)^{-1/2})$.
This in turn can be used to show that each of $p^*,q^*$ has all but at
most $8\Delta$ of its total probability mass is concentrated in a set (which we call the high part) of
size $O((\Delta/\epsilon)^2 \log(1 + 1/\Delta)\log n)$.  Once the desired concentration is proven, we can then perform a version of the~\cite{LMM03}
sampling procedure on the low parts of $p^*$ and $q^*$ (with an accuracy of only $\epsilon/\Delta$) to produce overall an
$\epsilon$-approximate equilibrium of support only a constant factor
larger.  The primary challenge in this argument is to prove that
$p^*$ and $q^*$  are concentrated.\footnote{We note that ~\cite{nash10} prove an upper bound  for the strong approximation stability condition using the same concentration idea. However, proving the desired concentration is significantly more challenging in our case since we deal with many equilibria.}  This is done through our
key lemma, Lemma~\ref{clm:p-small-norm} below.  In particular, Lemma~\ref{clm:p-small-norm}
can be used to show that if $p^*$ (or $q^*$) had a portion with substantial $L_1$ norm
 and low $L_2$ norm, then there must exist a
deviation from $p^*$ (or $q^*$) that is far from {\em all} equilibria and yet is
a well-supported approximate-Nash equilibrium, violating the stability
condition.  Proving the existence of such a deviation is challenging
because of the large number of equilibria that may exist.
Lemma~\ref{clm:p-small-norm} synthesizes the key points of this argument and it is
proven through a careful probabilistic argument.


In the following  we consider $c =(56)^2$ and let $c'=27$.
\begin{lemma}\label{clm:p-small-norm}
Let us fix $\epsilon$ and $\Delta$, $0\leq \epsilon \leq \Delta \leq 1$.
Let $\tp$ be an arbitrary distribution over $\{1,2, \ldots, n\}$. Let $S = c(\epsdist/\epsapx)^2\log n$ 
and fix $\beta \leq 1$ such that  $1-\beta \geq 8\Delta$.  Assume that the entries of $\tp$ can be partitioned into two sets $H$ and $L$  such that $\|\tp_L\|_1=1-\beta$, $\|\tp_H\|_1=\beta$, $\|\tp_L\|_2^2 \leq \frac{(1-\beta)^2}{S}$.
Let us fix $k_1$ n-dimensional vectors $v_1, \ldots, v_{k_1}$ with entries in $[-1,1]$ and $k_2$ distributions $p_1, \ldots, p_{k_2}$, where $k_1=n^2$ and $k_2 \leq n^{c'(\epsdist/\epsapx)^2}$. Then there exists $\tp'$  with $\supp(\tp') \subseteq \supp(\tp)$ such that:
\begin{enumerate}
\item $d(\tp,\tp') = 3\Delta$ and
\item $\tp' \cdot v_i \leq \tp \cdot v_i + \epsilon$ for all $i \in \{1, \ldots, k_1\}$.
\item $d(\tp',p_i) > d(\tp,p_i)- \Delta$ for all $i \in \{1, \ldots, k_2\}$..
\end{enumerate}
\end{lemma}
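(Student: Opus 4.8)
The plan is to construct $\tp'$ by a randomized rounding / sampling argument on the low part $L$, leaving the high part $H$ essentially untouched, and then show that with positive probability the resulting vector satisfies all three requirements simultaneously. Concretely, I would take a multiset $M$ of $m = \Theta(S)$ i.i.d.\ samples from the conditional distribution $\tp_L / \|\tp_L\|_1$ (the distribution $\tp$ restricted to $L$ and renormalized), form the empirical distribution $\hat p$ over those samples, and set $\tp' = \tp + 3\Delta\,(\hat p - \tp_L/\|\tp_L\|_1)$ — i.e.\ we move a total of $3\Delta$ mass off of $L$ (spread proportionally to $\tp_L$) and redeposit it according to $\hat p$. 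This automatically gives $\supp(\tp') \subseteq \supp(\tp)$ and, by construction, $d(\tp,\tp') = 3\Delta$ (requirement 1), since the only mass that moves is the $3\Delta$ chunk and $H$ is left alone; one should double-check the bookkeeping so that the total variation is exactly $3\Delta$ rather than merely at most $3\Delta$, adjusting by a final deterministic tweak inside $L$ if needed.

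For requirement 2, note $\tp' \cdot v_i - \tp\cdot v_i = 3\Delta\,\bigl(\hat p \cdot v_i - (\tp_L/\|\tp_L\|_1)\cdot v_i\bigr)$. Each $\hat p \cdot v_i$ is an average of $m$ bounded (in $[-1,1]$) i.i.d.\ random variables with mean $(\tp_L/\|\tp_L\|_1)\cdot v_i$, so a Hoeffding/Bernstein bound gives $\Pr[\,\hat p\cdot v_i - (\tp_L/\|\tp_L\|_1)\cdot v_i > \epsilon/(3\Delta)\,] \le \exp(-\Omega(m\epsilon^2/\Delta^2))$. Here the hypothesis $\|\tp_L\|_2^2 \le (1-\beta)^2/S$ is what lets us take $m$ proportional to $S = c(\Delta/\epsilon)^2 \log n$ while still controlling variance: the variance of a single sample's contribution scales like $\|\tp_L\|_2^2/\|\tp_L\|_1^2 \le 1/S$, so the tail exponent is $\Omega(m\epsilon^2/\Delta^2) \ge \Omega(\log n)\cdot(\text{const})$ once $c$ is chosen large enough ($c = 56^2$ is exactly the constant that makes this work for the $k_1 = n^2$ union bound). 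A union bound over the $k_1 = n^2$ vectors then fails with probability bounded below $1/3$, say.

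Requirement 3 is the one I expect to be the main obstacle, because it is a \emph{lower} bound on distances to the $k_2$ reference distributions $p_i$, and $k_2$ can be as large as $n^{c'(\Delta/\epsilon)^2}$, which dwarfs $n^2$; so the same union-bound budget is much tighter. The idea is that $d(\tp',p_i) \ge d(\tp,p_i) - d(\tp,\tp')$ always holds by the triangle inequality, giving $d(\tp',p_i) \ge d(\tp,p_i) - 3\Delta$ — but we need $-\Delta$, not $-3\Delta$, so a worst-case triangle bound is not enough. Instead I would argue that, in expectation, moving random mass off $L$ and spreading it as a concentrated empirical distribution $\hat p$ tends to \emph{increase} (or at least not decrease by much) the variation distance to any fixed $p_i$: writing $d(\tp',p_i) = \sum_j \max(p_{i,j} - \tp'_j, 0)$, the perturbation $\tp'_j - \tp_j$ on coordinates in $L$ is mean-zero, and on the coordinates where $p_{i,j} > \tp_j$ (which is where distance ``lives'') the expected change to the $\max(\cdot,0)$ terms is nonnegative by convexity (Jensen). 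Thus $\E[d(\tp',p_i)] \ge d(\tp,p_i) - o(\Delta)$ up to a small slack coming from coordinates near the hinge; a concentration bound (bounded differences / McDiarmid, since each sample changes $\tp'$ by $O(3\Delta/m)$ in $\ell_1$ and hence changes each $d(\tp',p_i)$ by $O(\Delta/m)$) then shows $d(\tp',p_i) > d(\tp,p_i) - \Delta$ except with probability $\exp(-\Omega(m))= \exp(-\Omega(S))= \exp(-\Omega((\Delta/\epsilon)^2\log n)) = n^{-\Omega((\Delta/\epsilon)^2)}$, and choosing the constants so that this beats $k_2 = n^{c'(\Delta/\epsilon)^2}$ (i.e.\ $c$ large relative to $c'$) lets the union bound over all $k_2$ of the $p_i$ still fail with probability below, say, $1/3$. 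Since the three failure events each have probability below $1/3$, a good outcome exists, proving the lemma. The delicate points to get right are: (i) the hinge-slack estimate in the Jensen step — this is where having $d(\tp,\tp') = 3\Delta$ rather than something smaller actually helps, since we have room to lose a bit; and (ii) balancing $m$, $c$, and $c'$ so that the $L$-variance control (needs $m \gtrsim S$), the $k_1$ union bound (needs $m\epsilon^2/\Delta^2 \gtrsim \log n$), and the $k_2$ union bound (needs $m \gtrsim c'(\Delta/\epsilon)^2\log n$ with the right constant) are all satisfiable at once.
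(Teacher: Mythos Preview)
Your probabilistic-method strategy and the Jensen/McDiarmid argument you sketch for requirement 3 are essentially the right ideas, but your specific randomized construction has a genuine gap in requirement 1. You assert that removing $3\Delta$ mass from $L$ proportionally to $\tp_L$ and redepositing it according to the empirical $\hat p$ yields $d(\tp,\tp')=3\Delta$ ``by construction''. This is false: since $\supp(\hat p)\subseteq L$, the removed and redeposited mass overlap on every sampled coordinate, so the net change is
\[
d(\tp,\tp') \;=\; 3\Delta\cdot d\!\left(\hat p,\; \tp_L/(1-\beta)\right) \;<\; 3\Delta,
\]
and with $m=\Theta(S)$ samples from a distribution of effective support $\ge S$ this quantity is neither equal to $3\Delta$ nor under any obvious control. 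A ``final deterministic tweak'' does not fix this without disturbing requirements 2 and 3, and rescaling by $3\Delta/d(\hat p,\tp_L/(1-\beta))$ would require a separate lower bound on that distance --- which, incidentally, is the only place the $\ell_2$ hypothesis could enter in your scheme. (It plays no role in your Hoeffding bound for requirement 2: the summands $v_{i,J_k}$ are already in $[-1,1]$, so their variance is at most $1$ regardless of $\|\tp_L\|_2$; your claim that the variance ``scales like $\|\tp_L\|_2^2/\|\tp_L\|_1^2$'' is a confusion.)

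The paper sidesteps this by a different randomization that makes the up- and down-movements land on disjoint coordinates. For each $i\in L$ flip an independent fair coin $X_i\in\{0,1\}$, leave $H$ fixed, and set
\[
\tp'_i \;=\; \tp_i \;+\; \frac{3\Delta\,\tp_i X_i}{\sum_{j\in L}\tp_j X_j} \;-\; \frac{3\Delta\,\tp_i(1-X_i)}{\sum_{j\in L}\tp_j(1-X_j)}\qquad(i\in L).
\]
Because $\{i:X_i=1\}$ and $\{i:X_i=0\}$ are disjoint and each block's total change is normalized to exactly $3\Delta$, one gets $d(\tp,\tp')=3\Delta$ on the nose (once the denominators are shown to be near $(1-\beta)/2$, which is a McDiarmid bound with differences $c_i=\tp_i$ --- this is where $\|\tp_L\|_2^2\le(1-\beta)^2/S$ is actually used). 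Requirements 2 and 3 then follow from further McDiarmid bounds with the same $c_i=\tp_i$; the Jensen step for requirement 3 is the one you describe (the paper gets $\E[d(p_i,\tp')]\ge d(p_i,\tp)-3\Delta/5$ after handling the random denominators, then concentrates). So your high-level plan is sound; what is missing is a randomization whose positive and negative perturbations are supported on disjoint index sets, so that requirement 1 holds exactly rather than approximately.
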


\begin{proof}
We show the desired result by using the probabilistic method.
Let us define the random variable $X_i=1$ with probability $1/2$ and $X_i=0$ with probability $1/2$.
Define $$\tp'_i=\tp_i,~~~~~~~\mathrm{for}~~~~~~i \in H~~~~~~\mathrm{and}~~~~~~~~~~~~~~~~~~~~~~~~~~~~~~~~~~~~~~~$$
 $$\tp'_i=\tp_i + \frac{3 \Delta \tp_i X_i}{\sum_{i \in L} \tp_i X_i } - \frac{  3 \Delta \tp_i(1- X_i)}{\sum_{i \in L} \tp_i (1-X_i) },~~~\mathrm{for}~~~i \in L.$$
We have $E[\sum_{i \in L} \tp_i X_i]=(1-\beta)/2$. By applying McDiarmid's inequality (see Theorem~\ref{mcd}) and using the fact that $\|\tp_L\|_2^2 \leq \frac{(1-\beta)^2}{S}$, we obtain that with probability at least $3/4$ we have both:
\begin{eqnarray}|\sum_{i \in L} \tp_i X_i -\frac{1-\beta}{2}| \leq \frac{1-\beta}{12}~~~~~\mathrm{and}~~~~~|\sum_{i \in L} \tp_i (1-X_i) - \frac{1-\beta}{2}| \leq \frac{1-\beta}{12}.\label{denom}\end{eqnarray}

Assume that this happens. In this case, $\tp'$ is a legal mixed strategy for the row player  and by construction we have $d(\tp, \tp')=3 \Delta$.

Let $v$ be an arbitrary vector in $\{v_1, \ldots, v_{k_1}\}$.
We have:
$$
\tp' \cdot v = \tp\cdot v + 3\Delta\left(
                   \frac{\sum_{i \in L}\tp_i X_i v_i}
			{\sum_{i \in L}\tp_i X_i}
                 - \frac{\sum_{i \in L}\tp_i (1-X_i) v_i}
			{\sum_{i \in L}\tp_i (1-X_i)} \right).
$$
Define
$$
Z_1 \; = \; \sum_{i \in L}\tp_i X_i v_i,\;\;\;
Z_2 \; = \; \sum_{i \in L}\tp_i X_i,\;\;\;
Z_3 \; = \; \sum_{i \in L}\tp_i (1-X_i) v_i, \;\;\;
Z_4 \; = \; \sum_{i \in L}\tp_i (1-X_i),
$$
so we have:
$$
\tp'\cdot v = \tp\cdot v + 3\Delta\left(\frac{Z_1}{Z_2} -
\frac{Z_3}{Z_4}\right).
$$

Using McDiarmid's inequality we get that with probability at least $1-1/n^3$, each
of the quantities $Z_1,Z_2,Z_3,Z_4$ is within
$(\frac{1-\beta}{28})(\frac{\epsilon}{\Delta})$ of its expectation; we
are using here the fact that the value of $X_i$ can change any one of
the quantities by at most $\tp_i$, so the exponent in the
McDiarmid bound is $-(\frac{\epsilon}{\Delta})^2(\frac{1-\beta}{28})^2 / \sum_{i\in
L}\tp_i^2 \leq -(\frac{c}{28^2})\log n$.  Also, we have $\E[Z_2] =
\E[Z_4] = \frac{1-\beta}{2}$, and $\E[Z_1] = \E[Z_3]$.
So, we get that with probability at least $1-1/n^3$ we have
\begin{eqnarray*}
\tp'\cdot v & \leq &
   \tp\cdot v + 3\Delta\left(
     \frac{\E[Z_1] + \frac{(1-\beta)\epsilon}{28\Delta}}
	  {\frac{1-\beta}{2} - \frac{(1-\beta)\epsilon}{28\Delta}}
      -
     \frac{\E[Z_3] - \frac{(1-\beta)\epsilon}{28\Delta}}
	  {\frac{1-\beta}{2} + \frac{(1-\beta)\epsilon}{28\Delta}}
     \right)\\
& = &
   \tp\cdot v + 3\Delta\left(
     \frac{(\frac{2}{1-\beta})\E[Z_1] + \frac{\epsilon}{14\Delta}}
	  {1 - \frac{\epsilon}{14\Delta}}
      -
     \frac{(\frac{2}{1-\beta})\E[Z_3] - \frac{\epsilon}{14\Delta}}
	  {1 + \frac{\epsilon}{14\Delta}}
     \right)\\
& = &
   \tp\cdot v + 3\Delta \left[
     \left(\frac{2}{1-\beta}\E[Z_1] + \frac{\epsilon}{14\Delta}\right)\left(1 +
		\frac{\epsilon}{14\Delta}\right)\right. \\
 &  & \left. -
    \left(\frac{2}{1-\beta}\E[Z_3] - \frac{\epsilon}{14\Delta}\right)\left(1 -
		\frac{\epsilon}{14\Delta}\right) \right]
   \left(\frac{1}{1 - (\frac{\epsilon}{14\Delta})^2} \right)\\
& \leq &
   \tp\cdot v + 3.1\Delta \left[
     \left(\frac{2}{1-\beta}\E[Z_1] + \frac{\epsilon}{14\Delta}\right)\left(1 +
		\frac{\epsilon}{14\Delta}\right)\right. \\
& & \left. -
     \left(\frac{2}{1-\beta}\E[Z_3] - \frac{\epsilon}{14\Delta}\right)\left(1 -
		\frac{\epsilon}{14\Delta}\right)\right]\\
& = &
   \tp\cdot v + 3.1\Delta\left(
     \frac{2}{1-\beta}\left(\E[Z_1] + \E[Z_3]\right)
        \left(\frac{\epsilon}{14\Delta}\right) +
	\frac{\epsilon}{7\Delta}\right).\\
\end{eqnarray*}

Finally, using the fact that $Z_1+Z_3 \leq \sum_{i\in L}\tp_i =
1-\beta$, we get
\begin{eqnarray*}
\tp'\cdot v & \leq & \tp\cdot v +
3.1\Delta(\frac{\epsilon}{7\Delta} + \frac{\epsilon}{7\Delta}),
\end{eqnarray*}
yielding the desired bound $\tp'\cdot v \leq
\tp\cdot v + \epsilon$.  Applying the union bound over all $i\in
\{1, \ldots, k_1\}$ we obtain that the probability that there exists $v$ in ${v_1, \ldots, v_{k_1}}$ such that  $\tp' \cdot v \geq \tp \cdot v + \epsilon$  is at most $1/3$.

Consider an arbitrary distribution $p$ in $\{p_1, \ldots, p_{k_2}\}$. Assume that $p=\tp+g$. By definition, we have:

\begin{eqnarray*}
d(p,\tp') &=& \frac{1}{2}\sum_{i \in L}
{\left|\tp_i+{g}_i - \tp_i- \frac{3\Delta \tp_i X_i}{\sum_{i \in L}
\tp_i X_i } + \frac{3\Delta \tp_i (1-X_i)}{\sum_{i \in L}^{n} \tp_i
(1- X_i )} \right|}+ \frac{1}{2}\sum_{i \in H} {\left| {g}_i
\right|}\\
&=&  \frac{1}{2} \sum_{i \in L} {\left|{g}_i -
\frac{3\Delta \tp_i X_i}{\sum_{i \in L} \tp_i X_i } + \frac{3\Delta
\tp_i (1-X_i)}{\sum_{i \in L} \tp_i (1- X_i )} \right|} +\frac{1}{2}
\sum_{i \in H} {\left| {g}_i \right|}\\
&\geq& \frac{1}{2} \sum_{i \in L} {\left|{g}_i - \frac{6 \Delta \tp_i
X_i}{1-\beta}  +  \frac{6\Delta \tp_i (1-X_i)}{1-\beta}  \right|} -
\frac{1}{2} \sum_{i \in L}\left[{\frac{6 \Delta
\tp_i X_i}{5(1-\beta)}} + \frac{6 \Delta
\tp_i(1-X_i)}{7(1-\beta)}\right]
+ \frac{1}{2}\sum_{i \in H} {\left| {g}_i \right|}\\
&\geq & \frac{1}{2} \sum_{i \in L} {\left|{g}_i - \frac{6 \Delta \tp_i
X_i}{1-\beta}  +  \frac{6\Delta \tp_i (1-X_i)}{1-\beta}  \right|} -
\frac{1}{2}  \sum_{i \in L}{\frac{6 \Delta \tp_i}{5(1-\beta)}} +
\frac{1}{2}\sum_{i \in H} {\left| {g}_i \right|}\\
&=& \frac{1}{2} \sum_{i \in L} {\left|{g}_i - \frac{6 \Delta \tp_i
X_i}{1-\beta}  +  \frac{6\Delta \tp_i (1-X_i)}{1-\beta}  \right|} -
\frac{3\Delta}{5} + \frac{1}{2}\sum_{i \in H} {\left| {g}_i \right|}
\end{eqnarray*}
where the
first inequality follows from applying relation~(\ref{denom}) to the denominators, and the last equality follows from the fact that $\|\tp_L\|_1=1-\beta$.

Let us denote by $Z=\frac{1}{2} \sum_{i \in L} {\left|{g}_i - \frac{6 \Delta \tp_i X_i}{1-\beta}  +  \frac{6\Delta \tp_i (1-X_i)}{1-\beta}  \right|}$. We have:
$$E[Z] = \frac{1}{2} \sum_{i \in L} \left[ {\frac{1}{2}\left|{g}_i - \frac{6 \Delta \tp_i}{1-\beta}\right|   + \frac{1}{2} \left|{g}_i + \frac{6\Delta \tp_i}{1-\beta} \right|} \right] \geq \frac{1}{2} \sum_{i \in L} {\left|{g}_i\right|},$$ therefore

$$E[d(p,\tp')] \geq E[Z] +\frac{1}{2} \sum_{i \in H} {\left| {g}_i
\right|} -\frac{3\Delta}{5}  \geq \frac{1}{2} \sum_{i \in L}
{\left|{g}_i\right|} +\frac{1}{2} \sum_{i \in H} {\left| {g}_i
\right|} -\frac{3\Delta}{5}=d(p,\tp) -\frac{3\Delta}{5}.$$
We can now
apply McDiarmid's inequality (see Theorem~\ref{mcd}) to argue that
with high probability
$Z$  is within $2\Delta/5$ of its expectation.
Note that $c_i=  \frac{6 \Delta \tp_i}{1-\beta}.$ Therefore:
\begin{eqnarray*}
\Pr\left\{ \left|Z - \E[Z] \right| \geq 2\Delta/{5} \right\} \leq
2e^{-2\Delta^2(1-\beta)^2 /(225 \sum \limits_{i \in L}{\tp_i^2
\Delta^2})} \leq
2e^{- (2/225)S  }
\leq  \frac{1}{3k_2}. \end{eqnarray*}

%
This then implies that
\begin{eqnarray*}
\Pr\left\{ d(p,\tp') \leq d(p,\tp') -\Delta \right\} \leq  \frac{1}{3k_2}. \end{eqnarray*}
By the union bound we get that the probability that there exists a $p$ in  $\{p_1, \ldots, p_{k_2}\}$ such that  $d(p,\tp') \leq d(p,\tp') -\Delta$  is at most $1/3$.
Summing up overall all possible events we get that there is a non-zero probability of (1), (2), (3) happening, as desired.
\end{proof}


\begin{proofof}{Theorem~\ref{thm:weak:small-support}}
Let $(p^*,q^*)$ be an arbitrary Nash equilibrium.
We show that each of $p^*$ and $q^*$ are highly concentrated, meaning that all but at
most $8\Delta$ of their total probability mass is concentrated in a set of
size $O((\Delta/\epsilon)^2 \log(1 + 1/\Delta)\log n)$. Let's consider $p^*$ (the argument for $q^*$ is similar).
  We begin by
partitioning it into
its \emph{heavy} and \emph{light} parts.  Specifically, we greedily
remove the largest entries of $p^*$ and place them into a set $H$ (the
heavy elements) until either:
\begin{enumerate}
\item[(a)] the remaining entries $L$ (the light
elements) satisfy the condition that  $\forall i\in L$, $\Pr[i] \leq
\frac{1}{S} \Pr[L]$ for $S$ as in Lemma \ref{clm:p-small-norm}, or
\item [(b)] $\Pr[H] \geq 1-8\epsdist$,
\end{enumerate}
whichever comes first.  Using the fact that the game satisfies the well supported
$(\epsapx,\epsdist)$-approximation stability condition, we will show that case (a) cannot occur first, which will imply that $p^*$ is highly concentrated.

In the following, we denote $\beta$ as the total probability mass over $H$.
Assume by contradiction that case (a) occurs first.
Note that we have $\|p_L\|_1=1-\beta$, $\|p_H\|_1=\beta$, and
 $$\sum_{i \in L} (p_i)^2  \leq \frac{1} {S} \sum_{i  \in L} {p_i} \sum_{i  \in L} {p_i} = \frac{1}{S}(1-\beta)^2,$$ so $\|p_L\|_2^2 \leq \frac{(1-\beta)^2}{S}$.
Let $v_{i,j}=C(e_i-e_j)$. Since $(p^*,q^*)$ is a Nash equilibrium we know that
$p^* \cdot v_{i,j} \leq 0$ for all $i$ and for all $j \in \supp(q^*)$.

By Lemma~\ref{clm:p-small-norm} there exists  $\tp'$ such that (1) $d(p^*,\tp') = 3\Delta$,
(2) $\tp' \cdot v_{i,j} \leq p^* \cdot v_{i,j} + \epsilon$ for all $i$ and for all $j \in \supp(q^*)$ and
(3)  $d(\tp',p_i) > d(p^*,p_i)- \Delta$ for all $i \in \{1, \ldots, k\}$ (here $k$ is the number of equilibria of the game).
By (2) we have that $\tp' \cdot v_{i,j} \leq \epsilon$ for all $i$ and for all $j \in \supp(q^*)$, which implies that $(\tp',q^*)$
 is a well supported $\epsilon$ approximate equilibrium (since by (2) the column player has at most an $\epsilon$ incentive to deviate and since $\supp(\tp') \subseteq \supp(p^*)$ we know that the row player has no incentive to deviate).
By (1) we also have that $(\tp',q^*)$ is $3\Delta$-far from
$(p^*,q^*)$.  We now use (3) to show that $(\tp',q^*)$ is $\Delta$-far
from all the other equilibria as well. Let $p$ be such an
equilibria. Note that if $d(p,p^*) > 4\Delta$, then clearly, by the
triangle inequality $d(p,\tp') > \Delta$. If $d(p,p^*) <
2\Delta$, clearly, by the triangle inequality, $d(p,\tp') >
\Delta$. Finally if $d(p,p^*) \in [2\Delta,4\Delta]$, then by (3), we
that $d(p,\tp') > \Delta$, as desired.

Overall we get that $(\tp',q^*)$ is a well supported $\epsilon$ approximate equilibria that is $\Delta$-far from all the other equilibria of the game.
This contradicts  the well supported $(\eps,\epsdist)$-approximation stability condition, as desired.

Thus case (b) occurs first, which implies that $p^*$ is highly concentrated.  We clearly have $1-\beta \leq 8\epsdist$; moreover, it is easy to show that the set
$H$ has at most $S \log{(1 + (8\epsdist)^{-1})}$ elements.  The key idea is that since $1-\beta \leq 8\epsdist$, we can
now apply the sampling argument of~\cite{LMM03} to $L$ with accuracy parameter $O(\eps/\epsdist)$ and then
union the result with $H$.
Specifically, let us decompose $p^*$ as: $$p^* = \beta p_H + (1-\beta)
p_L.$$ Applying the sampling argument of~\cite{LMM03} to $p_L$, we
have that by sampling a multiset $\X$ of $S$ elements from $L=\supp(p_L)$,
we are guaranteed that for any column $e_j$, we have: $$\left| (U_{\X})^T Ce_j
- p_L^T C e_j\right| \leq (\eps/8\epsdist),$$ where $U_{\X}$ is the
uniform distribution over $\X$. This means that for
$\tilde p = \beta p_H + (1-\beta) U_S$, all columns $e_j$ satisfy:
$$|{p^*}^TCe_j - \tp^TCe_j| \leq \eps/2.$$
 We have thus found the row portion of an
$\eps$-equilibrium with support of size $S \log{(1 + (8\epsdist)^{-1})}$  as desired.
\end{proofof}

\begin{corollary}\label{fixed}
Let us fix $\epsilon$ and $\Delta$, $0\leq \epsilon \leq \Delta \leq 1$.  Let  $G$ be a game with at most $n^{O((\epsdist/\epsapx)^2)}$ Nash equilibria satisfying  the well supported $(\epsapx, \epsdist)$-approximation stability  condition
  (or the $(\epsapx/2, \epsdist)$-stability to perturbations  condition).

$(1)$ Given $G$ we can find a well-supported $\epsapx$-equilibrium
$(p,q)$ of $G$ in time $n^{O((\epsdist/\epsapx)^2\logdel\log n)}$.

$(2)$ Given $G'$, an $L_\infty$ $\epsilon/6$-perturbation of $G$, we
can find a well supported $\epsapx$-equilibrium $(p,q)$ of $G$ in time $n^{O((\epsdist/\epsapx)^2\logdel\log n)}$.

In both cases,
$(p,q)$  is $\Delta$-close to a Nash equilibrium $(p^*,q^*)$ of $G$.
\end{corollary}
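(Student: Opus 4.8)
The plan is to build on Theorem~\ref{thm:weak:small-support} for part $(1)$ and then handle the perturbation in part $(2)$ by invoking the stability definition twice.

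\textbf{Part $(1)$: the algorithm.} By Theorem~\ref{thm:weak:small-support}, under the stated hypotheses there is a well-supported $\epsapx$-equilibrium $(p,q)$ of $G$ in which each player's strategy has support $s := O((\epsdist/\epsapx)^2\logdel\log n)$. The algorithm simply enumerates all pairs of multisets of size $s$ over the $n$ pure strategies of each player: there are at most $n^{O(s)} = n^{O((\epsdist/\epsapx)^2\logdel\log n)}$ such pairs, and for each pair we form the uniform distribution over the multiset and check in $\poly(n)$ time whether the resulting strategy pair is a well-supported $\epsapx$-equilibrium (this is a matter of comparing each pure-strategy payoff against the best response, all computable from $R,C$). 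Theorem~\ref{thm:weak:small-support} guarantees at least one such pair passes the test, so the search succeeds, and the total running time is $n^{O((\epsdist/\epsapx)^2\logdel\log n)}$ as claimed. Finally, since $(p,q)$ is a well-supported $\epsapx$-equilibrium, it is in particular a well-supported $2(\epsapx/2)$-approximate equilibrium, so the well-supported $(\epsapx,\epsdist)$-approximation stability condition (equivalently, by Theorem~\ref{stab3eqstab2}, the $(\epsapx/2,\epsdist)$-stability to perturbations condition) directly gives a Nash equilibrium $(p^*,q^*)$ of $G$ with $d((p,q),(p^*,q^*)) \leq \epsdist$.

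\textbf{Part $(2)$: handling a perturbation.} Here we are given $G'$, an $L_\infty$ $\epsapx/6$-perturbation of $G$, rather than $G$ itself. The key observation is that $G$ satisfies the well-supported $(\epsapx,\epsdist)$-approximation stability condition, hence (Theorem~\ref{stab3eqstab2}) the $(\epsapx/2,\epsdist)$-stability to perturbations condition, and one checks that this passes to $G'$ with slightly degraded parameters: any $L_\infty$ $(\epsapx/3)$-perturbation of $G'$ is an $L_\infty$ $(\epsapx/3 + \epsapx/6) = (\epsapx/2)$-perturbation of $G$, so every Nash equilibrium of such a perturbation is $\epsdist$-close to a Nash equilibrium of $G$ --- i.e.\ $G'$ satisfies the $(\epsapx/3,\epsdist)$-stability to perturbations condition, and by Theorem~\ref{stab3eqstab2} the well-supported $(2\epsapx/3,\epsdist)$-approximation stability condition. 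We then run the algorithm of part $(1)$ on $G'$ (with accuracy $2\epsapx/3$ in place of $\epsapx$; this only changes constants in the exponent, and the number of equilibria of $G'$ is still bounded appropriately since a perturbation cannot create more equilibria beyond the polynomial slack already absorbed in the $O(\cdot)$), obtaining a well-supported $(2\epsapx/3)$-equilibrium $(p,q)$ of $G'$ with support $O((\epsdist/\epsapx)^2\logdel\log n)$. Now translate back to $G$: since $|R-R'|_\infty, |C-C'|_\infty \leq \epsapx/6$, a well-supported $(2\epsapx/3)$-equilibrium of $G'$ is a well-supported $(2\epsapx/3 + 2\cdot\epsapx/6) = \epsapx$-equilibrium of $G$ (each pure-strategy payoff and each best-response payoff shifts by at most $\epsapx/6$ when passing from $G'$ to $G$, costing $\epsapx/3$ total). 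Thus $(p,q)$ is a well-supported $\epsapx$-equilibrium of $G$, and as in part $(1)$ the stability of $G$ yields a Nash equilibrium $(p^*,q^*)$ of $G$ with $d((p,q),(p^*,q^*)) \leq \epsdist$.

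\textbf{Main obstacle.} The only delicate point is bookkeeping the constant factors in the $\epsapx$-budget: we must split $\epsapx$ so that (i) $G'$ still satisfies a stability condition strong enough to apply Theorem~\ref{thm:weak:small-support}, and (ii) a well-supported approximate equilibrium of $G'$ pulls back to a well-supported $\epsapx$-equilibrium of $G$, all while keeping the exponent $O((\epsdist/\epsapx)^2\logdel\log n)$ unchanged up to constants. The splitting $\epsapx/6$ for the perturbation, $2\epsapx/3$ for the target accuracy on $G'$ works, but one should double-check that the ``at most $n^{O((\epsdist/\epsapx)^2)}$ Nash equilibria'' hypothesis is used correctly for $G'$ (the enumeration bound in Theorem~\ref{thm:weak:small-support} is applied to $G'$, so we need the number of equilibria of $G'$ bounded, which is part of the hypothesis once we observe the perturbation does not blow up the count beyond the absorbed constants). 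Everything else is a routine triangle-inequality and payoff-shift argument.
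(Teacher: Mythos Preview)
Your part (1) is essentially the paper's approach, except for one slip: you enumerate \emph{multisets} and test the \emph{uniform} distribution over each, whereas Theorem~\ref{thm:weak:small-support} only promises a small-\emph{support} well-supported $\epsapx$-equilibrium, not one that is uniform over a multiset. In the proof of Theorem~\ref{thm:weak:small-support} the constructed strategy is $\beta p_H + (1-\beta)U_{\cal X}$, where the heavy part $p_H$ keeps the exact (possibly irrational) weights of $p^*$; this is not uniform over any small multiset. The fix is exactly what the paper does: enumerate \emph{supports} of size $s$ and, for each candidate support pair, solve a linear program encoding the well-supported $\epsapx$-equilibrium constraints. The closeness to a true equilibrium then follows from stability exactly as you say.

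Your part (2) has a real gap. You try to apply Theorem~\ref{thm:weak:small-support} (via part (1)) to $G'$, but neither hypothesis is verified for $G'$. First, your stability argument shows only that Nash equilibria of $(\epsapx/3)$-perturbations of $G'$ are $\epsdist$-close to Nash equilibria of $G$, not of $G'$; that is \emph{not} the $(\epsapx/3,\epsdist)$-perturbation-stability of $G'$, so you cannot invoke Theorem~\ref{stab3eqstab2} to get well-supported approximation stability of $G'$. Second, the claim that ``a perturbation cannot create more equilibria beyond the polynomial slack'' is simply false in general: an $\epsapx/6$-perturbation of $G$ can have arbitrarily many (even a continuum of) Nash equilibria, so the $n^{O((\epsdist/\epsapx)^2)}$ bound for $G'$ is not available.

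The paper sidesteps both issues by never invoking Theorem~\ref{thm:weak:small-support} on $G'$. Instead it applies Theorem~\ref{thm:weak:small-support} to $G$ with accuracy $\epsapx/3$ (this is legitimate: well-supported $(\epsapx,\epsdist)$-approximation stability implies well-supported $(\epsapx/3,\epsdist)$-approximation stability, and the equilibrium count is a hypothesis on $G$), obtaining a small-support well-supported $\epsapx/3$-equilibrium of $G$. Since $G'$ is an $\epsapx/6$-perturbation of $G$, this same pair is a well-supported $2\epsapx/3$-equilibrium of $G'$, so an exhaustive search over small supports in $G'$ (with an LP test as in part (1)) is guaranteed to find \emph{some} well-supported $2\epsapx/3$-equilibrium of $G'$. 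Translating back through the $\epsapx/6$-perturbation gives a well-supported $\epsapx$-equilibrium of $G$, and stability of $G$ then yields the $\epsdist$-closeness to a true equilibrium. The key point is that only the \emph{existence} of a small-support object visible from $G'$ is needed, and that existence is certified by Theorem~\ref{thm:weak:small-support} applied to $G$, where the hypotheses actually hold.
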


\begin{proof} (1)  By Theorem~\ref{thm:weak:small-support}, we can simply try all supports of size $n^{O((\epsdist/\epsapx)^2\logdel\log n)}$ and for each of them write an LP to search for a well-supported $\epsilon$-Nash equilibrium.

(2) By Theorem~\ref{thm:weak:small-support}, $G$ has a well-supported $\epsilon/3$-Nash equilibrium with support of size $O((\epsdist/\epsapx)^2\logdel\log n)$. Since $G'$ is an $L_\infty$ $\epsilon/6$-perturbation of $G$, then this is also a well-supported   $2\epsilon/3$-Nash equilibrium of $G'$. Thus by trying all supports of size $n^{O((\epsdist/\epsapx)^2\logdel\log n)}$ in $G'$ we can find
a well-supported   $2\epsilon/3$-Nash equilibrium of $G'$. Since $G$ is an $L_\infty$ $\epsilon/6$-perturbation of $G'$, this will be a well-supported   $\epsilon$-Nash equilibrium of $G$.
\end{proof}

Corollary~\ref{fixed} improves by a factor $O(1/(\epsdist^2\logdel))$ in the exponent over the
bound of \cite{LMM03} for games satisfying this condition.
The most interesting range of improvements happens when $\epsilon$ is a function on $n$ and
$\Delta$ is a function of $\epsilon$; e.g., $\epsilon=1/\sqrt{n}$, $\Delta=10\epsilon$ --
in this case we obtain an improvement of $O(n/\log(n))$ in the exponent over the bound of \cite{LMM03}.

The proof of Theorem~\ref{thm:weak:small-support} also implies an interesting structural result, namely that {\em each} Nash equilibrium of
such a game is close to a pair of strategies of small support and by the triangle inequality, the same will happen for any perturbation of $G$. Formally:
\begin{theorem}
\label{thm:weak:small-support1}
Let us fix $\epsilon$ and $\Delta$, $0\leq \epsilon \leq \Delta \leq 1$.
Consider a game $G$ with at most $n^{O((\epsdist/\epsapx)^2)}$ Nash equilibria which satisfies
 the well supported $(\epsapx, \epsdist)$-approximation stability condition (or the  $(\epsapx/2, \epsdist)$-stability to perturbations  condition).
Then
it must be the case that:

$(1)$ Any Nash equilibrium  in $G$ is $8\Delta$-close to a pair of mixed strategies each with support of size at most
{$O((\epsdist/\epsapx)^2\logdel\log n)$}.

$(2)$ For any game $G'$ with $L_\infty$ distance $\epsilon/2$ of
$G$, any Nash equilibrium  in $G'$ is $9\Delta$-close to a pair of mixed strategies each with support of size
$O((\epsdist/\epsapx)^2\logdel\log n)$.
\end{theorem}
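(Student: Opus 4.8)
The plan is to extract the concentration structure already established inside the proof of Theorem~\ref{thm:weak:small-support} and combine it with the sampling argument and the triangle inequality. For part $(1)$, let $(p^*,q^*)$ be any Nash equilibrium of $G$. The proof of Theorem~\ref{thm:weak:small-support} shows (via the heavy/light partition and the contradiction using Lemma~\ref{clm:p-small-norm}) that case (a) cannot occur first, so each of $p^*$ and $q^*$ has at least $1-8\Delta$ of its mass on a heavy set $H$ with $|H|\le S\log(1+(8\Delta)^{-1}) = O((\Delta/\epsilon)^2\logdel\log n)$. Writing $p^* = \beta p_H + (1-\beta)p_L$ with $1-\beta\le 8\Delta$, the pair $(\beta p_H/\|\beta p_H\|_1,\, \beta q_H/\|\beta q_H\|_1)$ — i.e.\ $p^*,q^*$ restricted and renormalized to their heavy sets — has support of the claimed size, and moving from $p^*$ to this renormalized heavy part changes the distribution by at most $1-\beta\le 8\Delta$ in variation distance (deleting the light mass and spreading the renormalization factor over $H$ each costs at most $\frac{1}{2}(1-\beta)$, and these are the same mass moved). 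Hence $(p^*,q^*)$ is $8\Delta$-close to a small-support pair, giving $(1)$.

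For part $(2)$, I would invoke the stability hypothesis: by the $(\epsapx/2,\Delta)$-stability to perturbations condition (equivalently, using Theorem~\ref{stab3eqstab2}, the well-supported $(\epsapx,\Delta)$-approximation stability), any Nash equilibrium $(p',q')$ of a game $G'$ within $L_\infty$ distance $\epsilon/2$ of $G$ is $\Delta$-close to some Nash equilibrium $(p^*,q^*)$ of $G$. Wait — the hypothesis as stated is $(\epsapx/2,\Delta)$-stability, which requires perturbations of size $\epsapx/2$; here $G'$ is within $\epsilon/2$, so this applies directly. By part $(1)$, $(p^*,q^*)$ is $8\Delta$-close to a small-support pair $(\hat p,\hat q)$. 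The triangle inequality for the metric $d$ then gives $d((p',q'),(\hat p,\hat q)) \le d((p',q'),(p^*,q^*)) + d((p^*,q^*),(\hat p,\hat q)) \le \Delta + 8\Delta = 9\Delta$, which is $(2)$.

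The main technical point to be careful about is the first one: verifying that restricting-and-renormalizing a distribution to a set carrying all but $\delta$ of its mass moves it by at most $\delta$ in variation distance, and that this interacts correctly with the size bound $|H|\le S\log(1+(8\Delta)^{-1})$ on the heavy set (which itself follows from the greedy removal process, since each removed heavy element takes at least a $1/S$ fraction of the then-remaining mass, so after $t$ removals the remaining mass is at most $(1-1/S)^t \le e^{-t/S}$, and we stop once this drops to $8\Delta$). Everything else is bookkeeping: part $(1)$ is essentially already contained in the proof of Theorem~\ref{thm:weak:small-support}, and part $(2)$ is a one-line application of the stability definition followed by the triangle inequality. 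I do not expect any genuine obstacle here; the theorem is a structural corollary of machinery already in place.
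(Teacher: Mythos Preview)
Your proposal is correct and follows exactly the route the paper indicates: it explicitly says the result is implied by the proof of Theorem~\ref{thm:weak:small-support} (the heavy/light concentration gives part~(1)) together with the triangle inequality via the perturbation-stability hypothesis (giving part~(2)). The only places to tidy are cosmetic --- your variation-distance computation for renormalizing onto $H$ is right (it comes out to exactly $1-\beta\le 8\Delta$), and your bound $|H|\le S\log(1/(8\Delta))$ agrees with the paper's $S\log(1+(8\Delta)^{-1})$ up to constants absorbed in the $O(\cdot)$.
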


So far in Theorem~\ref{thm:weak:small-support} 
and Corollary~\ref{fixed} we have considered  $\epsilon$ and $\Delta$ fixed.
It is also interesting to consider games where the stability conditions hold uniformly for all $\epsilon$ small enough. We call such games uniformly stable games. Formally:
%
\begin{definition} Consider $t \geq 1$.
We say that a game is {\em $t$-uniformly stable to perturbations} (or $t$-uniformly well supported approximation stable) if there exists $\epsilon_0 = 1/\poly(n)$ such
that for all $\epsilon \leq \epsilon_0$, $G$ satisfies
$(\epsilon,t\epsilon)$ stability to perturbations (or the well supported $(\epsilon,t\epsilon)$ approximation stability).
\end{definition}

  For games satisfying  the $t$-uniform stability to perturbations  condition with  $t=O(\poly(\log(n)))$ we can find
$1/\poly(n)$-approximate equilibria in $n^{\poly(\log(n))}$ time,  and more generally $\epsilon$-approximate equilibria in $n^{\log{(1/\epsilon)} \poly(\log(n))}$ time, thus achieving a FQPTAS.  This provide a dramatic improvement over the best worst-case bounds known.

\begin{corollary}\label{forall}
(1) Let $t=O(\poly(\log(n)))$.
There is a FQPTAS to find approximate-equilibria in games satisfying
 the $t$-uniform well supported approximation stability condition
  (or the $t$-uniform stability to perturbations  condition)
with at most $n^{O(t^2)}$ Nash equilibria.

(2) Games satisfying
 the $t$-uniform well supported approximation stability condition
  (or the $t$-uniform stability to perturbations  condition)
with at most $n^{O(t^2)}$ Nash equilibria have the property that for
 any $\Delta$ each equilibrium is $8\Delta$-close to a pair of mixed strategies
 each with support of size $O(t^2\logdel\log n)$; moreover, for any $L_\infty$-perturbation of magnitude $\Delta/t$ of such games,
it must be the case that any Nash equilibrium  in $G'$ is $9\Delta$-close to a pair of mixed strategies each with support of size
$O(t^2\logdel\log n)$.
\end{corollary}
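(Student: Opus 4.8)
The plan is to obtain both parts by instantiating the fixed-$(\epsilon,\Delta)$ results already proved in this section — Corollary~\ref{fixed} for part~(1) and Theorem~\ref{thm:weak:small-support1} for part~(2) — at the scaling $\Delta=\Theta(t\epsilon)$ dictated by $t$-uniform stability, and then checking that the resulting exponent $(\Delta/\epsilon)^2\,\logdel\log n$ collapses to $\poly(\log n)$ once $t=\poly(\log n)$ and $\epsilon\ge 1/\poly(n)$. A convenient first move: by Theorem~\ref{stab3eqstab2} (which applies verbatim), $(\epsilon,t\epsilon)$ well-supported approximation stability is equivalent to $(\epsilon/2,t\epsilon)$ stability to perturbations, so a $t$-uniformly well-supported-approximation-stable game is $2t$-uniformly stable to perturbations (with threshold $\epsilon_0/2$), and since doubling $t$ changes the $t^2$ factors below only by a constant, it suffices to treat the stability-to-perturbations version.

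For part~(1): given a target accuracy $\epsilon\le\epsilon_0$, taking $\epsilon'=\epsilon/2\le\epsilon_0$ in the $t$-uniform condition shows $G$ satisfies $(\epsilon/2,\,t\epsilon/2)$ stability to perturbations — exactly the ``$(\epsapx/2,\epsdist)$-stability to perturbations'' hypothesis of Corollary~\ref{fixed} with $\epsapx=\epsilon$ and $\epsdist=t\epsilon/2$, so $\epsdist/\epsapx=t/2$ and the assumed bound $n^{O(t^2)}$ on the number of equilibria matches the required $n^{O((\epsdist/\epsapx)^2)}$. Corollary~\ref{fixed}(1) then yields a well-supported (hence ordinary) $\epsilon$-equilibrium in time $n^{O((\epsdist/\epsapx)^2\logdel\log n)}=n^{O(t^2\log(1+2/(t\epsilon))\log n)}$, and since $t\ge 1$ this is $n^{O(t^2\log(1/\epsilon)\log n)}=n^{\log(1/\epsilon)\poly(\log n)}$ when $t=\poly(\log n)$, hence $n^{\poly(\log n)}$ for $\epsilon=1/\poly(n)$. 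For a target $\epsilon>\epsilon_0$, run the same procedure with $\epsilon_0$: the $\epsilon_0$-equilibrium produced is a fortiori $\epsilon$-approximate, and the running time is still quasi-polynomial since $\epsilon_0=1/\poly(n)$. (As noted in the introduction, the scheme need not know $t$ or $\epsilon_0$: one tries supports of increasing size and stops at the first well-supported $\epsilon$-equilibrium, which Theorem~\ref{thm:weak:small-support} guarantees appears by size $O(t^2\logdel\log n)$.) This establishes the FQPTAS.

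For part~(2): fix $\Delta$ in the relevant range $\Delta\le t\epsilon_0$ — for $\Delta\ge 1/8$ the assertion is essentially vacuous, since then every strategy pair lies within variation distance $1$ of, e.g., $(e_1,e_1)$ — and apply Theorem~\ref{thm:weak:small-support1} with $\epsapx^{*}:=2\Delta/t$ and $\epsdist^{*}:=\Delta$. Taking $\epsilon'=\Delta/t\le\epsilon_0$ in the $t$-uniform condition gives $(\Delta/t,\Delta)=(\epsapx^{*}/2,\epsdist^{*})$ stability to perturbations, the hypothesis of that theorem; since $\epsdist^{*}/\epsapx^{*}=t/2$, the $n^{O(t^2)}$ bound on equilibria suffices. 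Part~(1) of the theorem then gives that every Nash equilibrium of $G$ is $8\epsdist^{*}=8\Delta$-close to a pair of strategies of support $O((\epsdist^{*}/\epsapx^{*})^2\logdel\log n)=O(t^2\logdel\log n)$, with $\logdel=\log(1+1/\epsdist^{*})=\log(1+1/\Delta)$ exactly the claimed factor; part~(2), whose perturbation radius $\epsapx^{*}/2$ is precisely $\Delta/t$, gives that any Nash equilibrium of any $L_\infty$ $\Delta/t$-perturbation $G'$ of $G$ is $9\Delta$-close to such a pair.

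All the mathematical content sits in the previously-established Theorem~\ref{thm:weak:small-support} / Corollary~\ref{fixed} / Theorem~\ref{thm:weak:small-support1}; the only delicate point is the parameter bookkeeping — picking $\epsapx^{*}$ so that the stated perturbation radius $\Delta/t$, the constants $8\Delta$ and $9\Delta$, and the support bound $O(t^2\logdel\log n)$ all come out exactly, tracking the $\epsilon_0$ threshold and the harmless constant-factor change in $t$ under the Theorem~\ref{stab3eqstab2} conversion, and confirming that $t=\poly(\log n)$ with $\epsilon\ge 1/\poly(n)$ turns $(\Delta/\epsilon)^2\logdel\log n$ into $\poly(\log n)$. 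I expect no genuine obstacle beyond this.
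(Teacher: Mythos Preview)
Your proposal is correct and is exactly the intended approach: the paper states Corollary~\ref{forall} without proof precisely because it follows by instantiating Corollary~\ref{fixed} and Theorem~\ref{thm:weak:small-support1} at $\epsdist=\Theta(t\epsapx)$ and checking the exponent, which is just what you do. Your parameter bookkeeping (choosing $\epsapx^{*}=2\Delta/t$, handling $\epsilon>\epsilon_0$, and converting between the two stability notions via Theorem~\ref{stab3eqstab2}) is accurate and more explicit than anything in the paper.
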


Corollary~\ref{forall} is especially interesting because the results of~\cite{cdt09} prove that it is PPAD-hard to find
$1/\poly(n)$-approximate equilibria in general bimatrix games.
  Our results shows that under the
(widely believed) assumption that PPAD is not contained in
quasi-polynomial time \cite{dask11},  such uniformly stable game are inherently easier for computation of approximate equilibria
than general bimatrix games.
Moreover, variants of many games appearing commonly in experimental economics including the public goods game and identical interest game~\cite{gtbook1} 
satisfy this condition.

\section{Converting the general case to the stable case}
\label{lowerbounds}
In this section we show that
 computing a $\epsilon$-equilibrium in a game satisfying the strong $(\epsilon,\Theta(\epsilon^{1/4}))$ approximation stability
 is as hard as computing an $\Theta(\epsilon^{1/4})$-equilibrium in a general game.
 For our reduction, we show that any general game can be embedded into one having the strong
 $(\epsilon,\Theta(\epsilon^{1/4}))$ approximation stability  property such that an $\epsilon$ equilibrium in the new game yields an $\Theta(\epsilon^{1/4})$ in the original game.
 Since both notions of (strong) stability to perturbations  and (strong) well supported approximation stability
generalize the strong approximation stability condition, the main lower bound in this section (Theorem~\ref{mainlb}) applies to these notions as well.

We start by stating a useful lemma that shows the existence of a family  of modified matching pennies games that are strong approximation stable games with certain properties that will be helpful in proving our main lower bound. 

\begin{lemma}
\label{basic-game-stable}
Assume that $\Delta \leq 1/10$. Consider the  games defined by the matrices:
$$R=\left[\begin{array}{ccccc}
           1 + \alpha_{1,1} & 1 + \alpha_{1,2} & \ldots &   1 + \alpha_{1,n}  & 0 \\
              1 + \alpha_{2,1} & 1 + \alpha_{2,2} & \ldots &   1 + \alpha_{2,n}  & 0 \\
              & \ldots&  \ldots&  \\
           1 + \alpha_{n,1} & 1 + \alpha_{n,2} & \ldots &   1 + \alpha_{n,n}  & 0 \\
               0 & 0 & \ldots  & 0 &  2\Delta
\end{array}\right], ~~C=
\left[\begin{array}{ccccc}
          \gamma_{1,1} &  \gamma_{1,2} & \ldots &   \gamma_{1,n}  & 1 \\
          \gamma_{2,1} &  \gamma_{2,2} & \ldots &   \gamma_{2,n}  & 1 \\
              & \ldots&  \ldots&  \\
           \gamma_{n,1} & \gamma_{n,2} & \ldots &   \gamma_{n,n}  & 1 \\
               2\Delta & 2\Delta & \ldots  & 2\Delta & 0
\end{array}\right]$$
where  $\alpha_{i,j} \in [-\Delta, 0]$ and $\gamma_{i,j}\in [0, \Delta]$ for all $i,j$. Each such game satisfies the strong $(\Delta^2,4\Delta)$
approximation stability condition. Moreover if $(p,q)$ is a $\Delta^2$-Nash equilibrium, then we must have
$$\Delta/2 \leq p_1+...+p_n \leq 4 \Delta~~~\mathrm{ and }~~~\Delta/2 \leq q_1+...+q_n \leq 4 \Delta.$$
\end{lemma}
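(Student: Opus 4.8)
The plan is to push everything through the two scalars $P := \sum_{i=1}^{n} p_i = 1 - p_{n+1}$ and $Q := \sum_{j=1}^{n} q_j = 1 - q_{n+1}$, the total mass each player places on its first $n$ (``top'') actions. First I would record the pure-strategy payoffs: for $i \le n$, $e_i^T R q = Q + \sum_{j\le n}\alpha_{i,j}q_j$ with $\sum_{j\le n}\alpha_{i,j}q_j \in [-\Delta Q,0]$, while $e_{n+1}^T R q = 2\Delta(1-Q)$; symmetrically $p^T C e_j = \sum_{i\le n}\gamma_{i,j}p_i + 2\Delta(1-P)$ with $\sum_{i\le n}\gamma_{i,j}p_i \in [0,\Delta P]$ for $j \le n$, and $p^T C e_{n+1} = P$. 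Averaging gives $p^T R q = PQ + A + 2\Delta(1-P)(1-Q)$ with $A \in [-\Delta PQ, 0]$, and $p^T C q = G + 2\Delta(1-P)Q + (1-Q)P$ with $G \in [0, \Delta PQ]$. These formulas make the matching-pennies structure transparent: up to the small perturbation terms, the row player wants $P$ large roughly when $Q$ exceeds $2\Delta/(1+2\Delta)$, and the column player wants $Q$ large roughly when $P$ is below $2\Delta/(1+2\Delta)$, so the equilibrium has both marginals near $2\Delta/(1+2\Delta)$, which lies in $[5\Delta/3, 2\Delta]$ for $\Delta \le 1/10$ and hence comfortably inside the claimed window $[\Delta/2, 4\Delta]$.

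Next I would extract from the $\Delta^2$-equilibrium condition the four scalar inequalities coming from each player's best ``top'' deviation and ``bottom'' deviation. Discarding the nonnegative perturbation contributions where convenient (i.e.\ using $A \le 0$, $G \le \Delta PQ$, $a^* := \max_{i\le n}\sum_{j\le n}\alpha_{i,j}q_j \ge -\Delta Q$, and $g^* := \max_{j\le n}\sum_{i\le n}\gamma_{i,j}p_i \ge 0$), these reduce to: (a) $2\Delta(1-Q)P \le PQ + \Delta^2$ (row does not prefer the bottom action); (b) $PQ(1+\Delta) \le 2\Delta Q + \Delta^2$ (column does not prefer the bottom action); (c) $2\Delta(1-P)(1-Q) \le \Delta PQ + P + \Delta^2$ (column does not prefer its best top action); (d) $(1-\Delta)Q \le PQ + 2\Delta(1-P)(1-Q) + \Delta^2$ (row does not prefer its best top action).

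The bounds then follow from elementary estimates using $\Delta \le 1/10$. For the upper bounds: if $P > 4\Delta$, then (a) forces $Q > \Delta$, and substituting this lower bound into (b) yields $4\Delta(1+\Delta) < 3\Delta$, a contradiction, so $P \le 4\Delta$; feeding $P \le 4\Delta$ into (d) gives $Q(1-3\Delta) \le 2\Delta + \Delta^2$, hence $Q \le 3\Delta < 4\Delta$. For the lower bounds: with $Q \le 4\Delta$ in hand, the assumption $P < \Delta/2$ makes the left side of (c) exceed $1.1\Delta$ while the right side stays below $0.7\Delta$, a contradiction, so $P \ge \Delta/2$. The bound $Q \ge \Delta/2$ is the one place that genuinely needs two steps: assuming $Q < \Delta/2$, inequality (a) (now using $P \ge \Delta/2 > 0$) first forces $P < 0.72\Delta$, and feeding both $Q < \Delta/2$ and $P < 0.72\Delta$ back into (c) then makes its left side about $1.76\Delta$ and its right side below $0.83\Delta$, again a contradiction. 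The gap between the true equilibrium marginals ($\approx 5\Delta/3$ to $2\Delta$) and the stated window $[\Delta/2,4\Delta]$ is exactly what absorbs the $\alpha$- and $\gamma$-terms.

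Finally, for the strong $(\Delta^2, 4\Delta)$-approximation stability claim: a Nash equilibrium $(p^*,q^*)$ exists by Nash's theorem, and being in particular a $\Delta^2$-equilibrium it has $P^*, Q^* \le 4\Delta$ by the bounds above, as does every $\Delta^2$-equilibrium $(p,q)$. I would then use the elementary bound that for any two distributions $r, r'$ on the $n+1$ actions, $d(r,r') = \sum_i \max(r_i - r'_i, 0) \le \sum_{i\le n} r_i + \max\big(\sum_{i\le n} r'_i - \sum_{i\le n} r_i, 0\big) = \max\big(\sum_{i\le n} r_i, \sum_{i\le n} r'_i\big)$, which is at most $4\Delta$ whenever both marginals on $\{1,\ldots,n\}$ are at most $4\Delta$. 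Applying this to the row marginals and to the column marginals separately gives $d((p,q),(p^*,q^*)) \le 4\Delta$ for every $\Delta^2$-equilibrium $(p,q)$, which is exactly the asserted condition. The only real difficulty is bookkeeping: carrying the $\alpha$- and $\gamma$-perturbation terms cleanly through the four inequalities and verifying that the constants close for $\Delta \le 1/10$, the lower bound on $Q$ being the single place one must chain two of the inequalities.
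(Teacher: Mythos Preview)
Your proof is correct and follows essentially the same route as the paper: both reduce the problem to bounding the marginals $P=\sum_{i\le n}p_i$ and $Q=\sum_{j\le n}q_j$ by exploiting the four possible deviations (row to top, row to bottom, column to top, column to bottom), then conclude the strong stability from the fact that both the Nash equilibrium and every $\Delta^2$-equilibrium place at most $4\Delta$ mass on $\{1,\dots,n\}$. The only organizational difference is that you first extract the four scalar inequalities (a)--(d) and then manipulate them algebraically, whereas the paper runs a direct case analysis (Cases~1--4 on the value of $p_{n+1}$ or $q_{n+1}$) and exhibits the profitable deviation in each case; your variation-distance bound $d(r,r')\le\max\bigl(\sum_{i\le n}r_i,\sum_{i\le n}r'_i\bigr)$ is also a slightly cleaner way to close the stability argument than the paper's appeal to relation~(\ref{five}).
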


See Appendix~\ref{proofs} for a proof.
We now present the main result of this section.
\begin{theorem}
\label{mainlb}
Computing an $\epsilon$-equilibrium in a game satisfying  the strong $(\epsilon,8 \epsilon^{1/4})$ approximation stability condition
 is as hard as computing
an $(8\epsilon)^{1/4}$-equilibrium in a general game.
\end{theorem}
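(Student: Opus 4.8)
The plan is to give a reduction from general bimatrix games to strongly-approximation-stable games by embedding. Given an arbitrary $n \times n$ game $(\hat R, \hat C)$ with all entries in $[0,1]$, I would build an $(n+1) \times (n+1)$ game $(R,C)$ of exactly the form appearing in Lemma~\ref{basic-game-stable}: the top-left $n \times n$ block of $R$ is $\mathbf{1} + \delta \hat R - \delta \mathbf{1}$ (so that $\alpha_{i,j} = \delta(\hat R_{i,j}-1) \in [-\delta, 0]$) and similarly the top-left block of $C$ is $\delta \hat C$ (so $\gamma_{i,j} = \delta \hat C_{i,j} \in [0,\delta]$), with the extra "punishment/matching-pennies" row and column as in the lemma, using parameter $\Delta$ in place of the lemma's $\Delta$. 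Setting $\Delta = \delta$ and then choosing $\delta$ as a function of $\epsilon$ so that $\delta = \Theta(\epsilon^{1/4})$ — concretely $\epsilon = \delta^4/$(constant), matching the $\Delta^2 = \delta^2$ approximation-stability parameter of the lemma after one more rescaling — Lemma~\ref{basic-game-stable} immediately tells us the constructed game satisfies the strong $(\epsilon, 8\epsilon^{1/4})$ approximation stability condition (with the constants worked so that $4\Delta = 4\delta = 8\epsilon^{1/4}$ and $\Delta^2 \geq \epsilon$). Since strong stability to perturbations and strong well-supported approximation stability both generalize strong approximation stability, the conclusion transfers to those notions as well, as claimed in the surrounding text.

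The core of the argument is the second half of Lemma~\ref{basic-game-stable}: any $\Delta^2$-equilibrium $(p,q)$ of the constructed game must place total mass between $\Delta/2$ and $4\Delta$ on the first $n$ coordinates. So given an $\epsilon$-equilibrium $(p,q)$ of the big game (with $\epsilon \le \Delta^2$), I would renormalize: let $p' = (p_1,\dots,p_n)/s$ where $s = p_1+\cdots+p_n \in [\Delta/2, 4\Delta]$, and similarly $q'$. I then need to show $(p',q')$ is an $O(\epsilon^{1/4})$-equilibrium of the original game $(\hat R, \hat C)$. The key estimates: the row player's payoff in the big game from row $i \le n$ against $q$ is $e_i^T R q = \sum_{j \le n} (1+\alpha_{i,j}) q_j + 0 \cdot q_{n+1}$; since $\sum_{j\le n} q_j = s$ and each $\alpha_{i,j}$ is within $\delta$ of $0$, this equals $s(e_i^T \hat R q'/1 \cdot \delta\text{-scaled} + \ldots)$ — more carefully, $e_i^T R q = s \cdot e_i^T(\mathbf{1}+\delta\hat R - \delta\mathbf 1)q' = s(1 + \delta(e_i^T \hat R q' - 1))$. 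The $\epsilon$-equilibrium condition on the big game says $e_i^T R q \le p^T R q + \epsilon$ for all $i$. Dividing through by $s\delta$ and using $s \ge \Delta/2 = \delta/2$ turns an additive $\epsilon$ slack into additive $\epsilon/(s\delta) \le 2\epsilon/\delta^2$ slack in the inequality $e_i^T \hat R q' \le (\text{row part of }p)^T\hat R q'/\ldots$; one also has to check that $p$'s mass on row $n+1$ contributes negligibly to $p^T R q$ (it contributes at most $2\Delta \cdot p_{n+1}q_{n+1} \le 2\Delta$, but more relevant is that $p^T Rq = s\cdot {p'}^T(\mathbf 1 + \delta\hat R-\delta\mathbf 1)q' + p_{n+1}\cdot 2\Delta q_{n+1}$, and the punishment term is small relative to $s\delta$). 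With $\delta = \Theta(\epsilon^{1/4})$ we get $\epsilon/\delta^2 = \Theta(\epsilon^{1/2})= \Theta((\epsilon^{1/4})^2)$, which is even smaller than the target $\Theta(\epsilon^{1/4})$, so there is room to absorb all lower-order terms (the $\delta$-perturbations $\alpha,\gamma$ themselves cost $O(\delta) = O(\epsilon^{1/4})$ per payoff, which is the dominant error). The column player is symmetric, using the $2\Delta$ entries and the $\gamma_{i,j}$.

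The main obstacle I expect is bookkeeping the constants so that a clean statement "$(8\epsilon)^{1/4}$-equilibrium" comes out: one must track (i) the scaling factor $\delta$ relating the original payoffs to the embedded ones, (ii) the renormalization factor $s \in [\Delta/2,4\Delta]$ which both shrinks (good, for converting the additive $\epsilon$) and is only known up to a factor $8$ (bad, for a uniform bound), and (iii) the $O(\delta)$ error from the perturbation entries $\alpha_{i,j},\gamma_{i,j}$ which is the true bottleneck forcing $\Delta = \Theta(\epsilon^{1/4})$ rather than anything smaller. A secondary subtlety is confirming that the renormalized $(p',q')$ is genuinely a distribution (it is, after dividing by $s$) and that no "best response is row $n+1$" issue arises — this is exactly what the mass bounds $p_1+\cdots+p_n \ge \Delta/2$ from Lemma~\ref{basic-game-stable} rule out. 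I would organize the write-up as: (1) define the embedding and invoke Lemma~\ref{basic-game-stable} for the stability claim; (2) take an $\epsilon$-equilibrium, apply the mass bounds, renormalize; (3) do the row-player payoff computation and divide by $s\delta$; (4) repeat for the column player; (5) collect constants to state the $(8\epsilon)^{1/4}$ bound.
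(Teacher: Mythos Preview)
Your high-level plan --- embed via the Lemma~\ref{basic-game-stable} template, invoke it for stability and for the mass bounds, then renormalize --- matches the paper exactly (the paper scales the original payoffs by $\Delta/2$ where you use $\delta=\Delta$, a cosmetic difference). The gap is in your ``divide through by $s\delta$'' step. Your formula for $p^T R q$ is missing a mass factor: the top-left block contributes a product of \emph{both} the row mass $t=\sum_{i\le n}p_i$ and the column mass, which you conflate under the single symbol $s$. More seriously, the claim that ``the punishment term is small relative to $s\delta$'' is false: $p_{n+1}\cdot 2\Delta q_{n+1}\approx 2\Delta$ while $s\delta\approx\Delta^2$, so the punishment term is \emph{larger} by a factor $\Theta(1/\Delta)$. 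Starting from the pure-deviation inequality $e_i^T Rq\le p^T Rq+\epsilon$ and dividing as you describe therefore picks up a $\Theta(1/\Delta)$ error and the argument collapses. There is also no separate ``$O(\delta)$ perturbation error'' coming from the $\alpha_{i,j},\gamma_{i,j}$: those entries \emph{are} the rescaled original game, not noise --- the embedding is exact.

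The paper handles this by arguing by contradiction via the \emph{mixed} deviation $t\, e_i + p_{n+1}e_{n+1}$: move all of the first-$n$-rows mass onto row $i$ but leave row $n{+}1$'s mass where it is. Then $p_{n+1}\cdot 2\Delta q_{n+1}$ appears identically on both sides and cancels, leaving precisely $t\,s\,(\Delta/2)\bigl[e_i^T\hat Rq' - {p'}^T\hat Rq'\bigr]\le\epsilon$. Since $t,s\ge\Delta/2$ one is dividing $\epsilon$ by $\Theta(\Delta^3)$, not $\Theta(\Delta^2)$, and it is the requirement $\epsilon/\Delta^3\le\Delta$ that forces $\Delta=\Theta(\epsilon^{1/4})$ --- not any perturbation error. (Equivalently you could combine the pure-deviation inequalities for row $i$ and row $n{+}1$ with weights $t$ and $1-t$; that also cancels the punishment term and gives the same bound.)
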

\begin{proof} The main idea is to construct a linear embedding of any given game into a larger game with one more strategy per player played with large probability, thereby compressing the incentives of the original game into a smaller scale.
In particular,  consider $\Delta=(8\epsilon)^{1/4}$ and consider a general game with payoff matrices $R$ and $C$. Let us construct a new game with the payoff matrix for the row
player  $R'$ defined as:

$$\left[\begin{array}{cc}
            \onem_{n,n} -  (\Delta/2) \onem_{n,n}  + (\Delta/2) R & \zerom_{n,1} \\
           \zerom_{1,n} &  2\Delta
\end{array}\right]$$

and for the column player $C'$ defined as:

$$\left[\begin{array}{cc}
           (\Delta/2) \onem_{n,n} + (\Delta/2) C & \onem_{n,1} \\
           2\Delta  \onem_{1,n}&  0
\end{array}\right]$$
(For $s,r >0$,  the matrix $\onem_{s,r}$ is the $s \times r$ matrix with all entries set to $1$ and  the matrix $\zerom_{s,r}$ is the $s \times r$ matrix with all entries set to $0$.)
 By Lemma~\ref{basic-game-stable}, the new game  defined by $R'$ and $C'$ satisfies the strong $(\Delta^2,4\Delta)$ approximation stability condition,
which in turn implies satisfying  $(\epsilon,8 \epsilon^{1/4})$ approximation stability (since $\epsilon \leq \Delta^2$ and $4 (8)^{1/4} \leq 8$).
We  show next that any $\Delta^4/8$-equilibrium in this new game (defined by $R'$ and $C'$) induces  a $\Delta$-equilibrium in the
 original game (defined by $R$ and $C$). Since $\Delta=(8\epsilon)^{1/4}$, this implies the desired result.

Let $(p,q)$ be an $\Delta^4/8$-equilibrium in the new game. By Lemma~\ref{basic-game-stable} (since $\Delta^4/8 \leq \Delta^2$),
$p$ must have $\beta \Delta$ probability mass in the first $n$ rows and $q$ must have $\alpha \Delta$  probability mass in the first $n$ columns, where $\alpha, \beta \in [1/2, 4]$.
Let $p_{\fst}$, $q_{\fst}$ denote $p$ restricted to the first $n$ rows and $q$ restricted to the first $n$ columns.
Let $\tp_{\fst}=p_{\fst}/|p_{\fst}|$ and $\tilde{q}_{\fst}=q_{\fst}/|q_{\fst}|$, where $|p_{\fst}| = \beta \Delta$ and $|q_{\fst}| = \alpha \Delta$. We show that that $(\tp_{\fst}, \tilde{q}_{\fst})$ is a $\Delta$-equilibrium in the original game defined by $R$ and $C$.
We prove this by contradiction. Assume this is not the case. Assume first that the row player has an  $\Delta$  incentive to deviate.
 There must exist $e_i$ such that:
$$e_i^T R \tilde{q}_{\fst} > \tp_{\fst} R \tilde{q}_{\fst} + \Delta.$$
Multiplying both sides by $\alpha \beta \Delta^3/2$ and using the fact that $\alpha \beta \geq 1/4$ we get:
\begin{eqnarray} \beta \Delta e_i^T (\Delta/2) R q_{\fst} > p_{\fst} (\Delta/2) R q_{\fst} +  \Delta^4/8. \label{rowineq1}\end{eqnarray}
We clearly have $\beta \Delta e_i^T (\onem_{n,n}- (\Delta/2) \onem_{n,n}) q_{\fst} = p_{\fst}^T (\onem_{n,n}- (\Delta/2) \onem_{n,n}) q_{\fst}$, and by adding this quantity as well as
 $p_{n+1} (2\Delta) q_{n+1}$ to both sides of inequality~\ref{rowineq1} we get:
\begin{eqnarray*}&& \beta \Delta e_i^T (\onem_{n,n}- (\Delta/2) \onem_{n,n} + (\Delta/2) R) q_{\fst} +  p_{n+1} (2\Delta) q_{n+1} > \\&& p_{\fst} (\onem_{n,n}- (\Delta/2) \onem_{n,n} + (\Delta/2) R) q_{\fst} + p_{n+1} (2\Delta) q_{n+1} + \Delta^4/8.\end{eqnarray*}
which implies:
 $$(\beta \Delta e_i + p_{n+1} e_{n+1})^T R'q   > p^T R' q  +\Delta^4/8.$$
 Therefore there exists a deviation for the row player (namely moving all $\beta \Delta$ probability mass from rows $1,2, \ldots, n$ onto row $i$),
 yielding a benefit of $\Delta^4/8$ to the row player. This contradicts the assumption that  $(p,q)$ is an
  $\Delta^4/8$-equilibrium in the new game, as desired.

Assume now that the column player has an  $\Delta$  incentive to deviate.  There must exist $e_j$ such that:
$$\tp_{\fst}^T C e_j > \tp_{\fst}^T C \tilde{q}_{\fst} + \Delta.$$
Multiplying both sides by $\alpha \beta \Delta^3/2$ and using the fact that $\alpha \beta \geq 1/4$  we get:
$$p_{\fst}^T  (\Delta/2) C (\alpha \Delta e_j) > p_{\fst}^T (\Delta/2) C q_{\fst} +\Delta^4/8.$$
We have $p_{\fst}^T (\Delta/2) \onem_{n,n} \alpha \Delta e_j = p_{\fst}^T (\Delta/2) \onem_{n,n} q_{\fst}  $,  so:
\begin{eqnarray}p_{\fst}^T  ((\Delta/2) C +  (\Delta/2) \onem_{n,n}) \alpha \Delta e_j > p_{\fst}^T ( (\Delta/2) C +  (\Delta/2) \onem_{n,n}) q_{\fst} +\Delta^4/8. \label{columineq1}\end{eqnarray}
We also have  $ p_{n+1}(2\Delta, \ldots, 2\Delta) \alpha \Delta e_j=p_{n+1} (2\Delta, \ldots, 2\Delta )q_{\fst} $. By adding this quantity
as well as the term $p_{\fst} (1, \ldots, 1 )q_{n+1}$ to the both sides of the inequality~\ref{columineq1} inequality we get:
$$p^T  C' (\alpha \Delta e_j + q_{n+1} e_{n+1})   >p^T C' q + \Delta^4/8.$$
 Therefore there exists a deviation for the column player (namely moving all $\alpha \Delta$ probability mass from columns $1,2, \ldots, n$ onto column
 $i$), yielding a benefit of $\Delta^4/8$ to the column player.
This contradicts the assumption that  $(p,q)$ is an
  $\Delta^4/8$-equilibrium in the new game, as desired.
\end{proof}


Theorem~\ref{mainlb} implies that for any $\epsilon \leq (1/8) (0.3393)^4$, an algorithm for finding an $\epsilon$-equilibria
in a game satisfying the strong $(\epsilon,8 \epsilon^{1/4})$ approximation stability condition would imply a better than currently known
algorithm for finding approximate equilibria in  general games (with an approximation factor of $0.3393$).

\section{Stability in constant-sum games}
\label{zero-sum}
Consider a game defined by $R$ and $C$.
Let $$\Ps^*=\{p, \exists~q~~s.t.~~(p,q)~~\mathrm{~is~a~Nash~equilibrium}\}$$ and $$\Qs^*=\{q, \exists~p~~s.t.~~(p,q)~~\mathrm{~is~a~Nash~equilibrium}\}.$$
We say that $p$ is $\Delta$-far from $\Ps^*$ if the minimum distance between $p$ and $p' \in \Ps^*$ is $> \Delta$.
 Let $v_R$ and $v_C$ be the unique values of the row and column player respectively in a Nash equilibrium~\cite{gtbook}.
Lemmas~\ref{zerosum1} and~\ref{zerosum2} below characterize constant sum
 games satisfying approximation stability terms of properties of the
 space of mixed strategies for the row player and column player
 separately.
Theorem~\ref{algzerosum} gives a polynomial time algorithm for determining the approximately best  parameters for the strong approximation
 stability property for a given game.

\begin{lemma}
\label{zerosum1}
If for any $p$ that is $\Delta$-far from $\Ps^*$ there exists $e_j$ such that $p^T R e_j < v_R-\alpha$ and for any q  that is $\Delta$-far from $\Qs^*$ there exists $e_j$ such that $e_j^T C q < v_C-\alpha$, then the game satisfies the $(\alpha/2,\Delta)$ approximation stability.
\end{lemma}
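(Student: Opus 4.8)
The plan is to show directly that every $\alpha/2$-approximate equilibrium $(p,q)$ must have both $p$ within distance $\Delta$ of $\Ps^*$ and $q$ within distance $\Delta$ of $\Qs^*$, and then use the structure of constant-sum games to upgrade this to a genuine Nash equilibrium that is $\Delta$-close to $(p,q)$. First I would take an arbitrary $\alpha/2$-equilibrium $(p,q)$; by definition the row player has at most $\alpha/2$ incentive to deviate, so for every pure strategy $e_j$ we have $p^T R e_j \le p^T R q + \alpha/2$. The key observation is that in a constant-sum game (say $R + C = s\,\onem_{n,n}$), the column player's payoff in $(p,q)$ is tied to the row player's, and since $q$ can deviate by at most $\alpha/2$ as well, one can sandwich $p^T R q$ near the value $v_R$: concretely $p^T R q \ge v_R - \alpha/2$ and similarly $p^T R q \le v_R + \alpha/2$ (using that $q$'s near-optimality forces the row payoff not to exceed $v_R$ by much, via the constant-sum relation and $v_R + v_C = s$). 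Combining, $p^T R e_j \le v_R + \alpha$ for all $j$, i.e. $\max_j p^T R e_j \le v_R + \alpha \le v_R + \alpha$; wait — I need the strict form, so I would instead argue the contrapositive of the hypothesis: if $p$ were $\Delta$-far from $\Ps^*$, the hypothesis gives some $e_j$ with $p^T R e_j < v_R - \alpha$, but that is an inequality in the wrong direction to immediately contradict near-optimality, so the cleaner route is to use the hypothesis to bound $p^T R q$ from below and then show $p$ near-optimal forces closeness.

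Let me restate the core argument more carefully. Suppose $(p,q)$ is an $\alpha/2$-equilibrium. I claim $p$ is not $\Delta$-far from $\Ps^*$. Suppose for contradiction it is; then by hypothesis there is $e_j$ with $p^T R e_j < v_R - \alpha$. Now I use the minimax/value structure: since $q$ is a mixed strategy, $p^T R q = \sum_j q_j (p^T R e_j)$, and I want to conclude $p^T R q$ is small. This needs that \emph{many} of the $p^T R e_j$ are small, not just one — so the single-$e_j$ hypothesis alone is not enough, and I would instead appeal to the constant-sum value: the column player in $(p,q)$ is playing an $\alpha/2$-best response, so $p^T C e_j \le p^T C q$ would need the analogous argument. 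The honest reading is that the hypothesis is meant to be applied to \emph{both} players simultaneously: from the $q$-side hypothesis, if $q$ is $\Delta$-far from $\Qs^*$ there is $e_j$ with $e_j^T C q < v_C - \alpha$, and then the column player's $\alpha/2$-near-optimality ($p^T C q \ge p^T C e_j - \alpha/2$ is automatic; what matters is $p^T C q \le \max_i e_i^T C q + $ nothing useful) — so the right move is: the row player near-optimal means $p^T R q \ge \max_j p^T R e_j - \alpha/2$ is false direction; rather $e_i^T R q \le p^T R q + \alpha/2$ for all $i$, hence $\max_i e_i^T R q \le p^T R q + \alpha/2$, so $q$ guarantees the row player at most $p^T R q + \alpha/2$, which means $v_R \le p^T R q + \alpha/2$ (since $v_R$ is the max-min, and $q$ is a specific column strategy whose best-response value upper bounds... actually $v_R = \min_{q'} \max_i e_i^T R q' \le \max_i e_i^T R q \le p^T R q + \alpha/2$). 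Thus $p^T R q \ge v_R - \alpha/2$. Symmetrically, from the column player's near-optimality, $p^T C q \ge v_C - \alpha/2$. Adding (constant-sum: $p^T R q + p^T C q = s = v_R + v_C$) gives equality is nearly tight, forcing $p^T R q \le v_R + \alpha/2$ too. Now if $q$ is $\Delta$-far from $\Qs^*$, the hypothesis gives $e_j$ with $e_j^T C q < v_C - \alpha$, i.e. $p^T R e_j = s - p^T C e_j$... hmm, that's $p^T R e_j$ not $e_j^T R q$. So the hypothesis on $q$ controls $e_j^T C q$, which via constant-sum is $e_j$ against $q$ for the row player: $e_j^T R q = s - e_j^T C q > s - v_C + \alpha = v_R + \alpha$, contradicting $\max_i e_i^T R q \le p^T R q + \alpha/2 \le v_R + \alpha$. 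So $q$ must be within $\Delta$ of $\Qs^*$; symmetrically $p$ is within $\Delta$ of $\Ps^*$.

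Finally, I must produce a single Nash equilibrium $(p^*,q^*)$ with $d((p,q),(p^*,q^*)) \le \Delta$. Here I would use the rectangular structure of the Nash equilibrium set in constant-sum games: $\Ps^* \times \Qs^*$ is exactly the set of Nash equilibria (every maximin row strategy paired with every minimax column strategy is an equilibrium). So taking $p^* \in \Ps^*$ achieving $d(p,p^*) \le \Delta$ and $q^* \in \Qs^*$ achieving $d(q,q^*) \le \Delta$, the pair $(p^*,q^*)$ is a Nash equilibrium and $d((p,q),(p^*,q^*)) = \max(d(p,p^*),d(q,q^*)) \le \Delta$, giving $(\alpha/2,\Delta)$-approximation stability. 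The main obstacle, and the step I would be most careful about, is the chain of inequalities tying $p^T R q$ to $v_R$ and then converting the single-pure-strategy hypothesis on $q$ (a statement about $e_j^T C q$) into a contradiction with the row player's near-optimality via the constant-sum identity $R + C = s\,\onem_{n,n}$; getting the direction of every inequality right there is where an error would most likely creep in. The rectangularity of the equilibrium set is standard (it follows from the minimax theorem) and I would cite \cite{gtbook} for it, as the paper already does for uniqueness of the values.
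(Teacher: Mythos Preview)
Your proposal is correct and takes essentially the same approach as the paper: both arguments use row-player near-optimality together with the minimax value to pin $p^TRq$ within $\alpha/2$ of $v_R$, then apply the constant-sum identity to convert the single-column hypothesis into a more-than-$\alpha/2$ deviation incentive for the other player, and finally invoke the interchangeability (rectangularity) of equilibria in constant-sum games. The only cosmetic difference is that the paper runs the argument contrapositively with an explicit case split on whether $p^TRq < v_R - \alpha/2$, whereas you first establish $v_R - \alpha/2 \le p^TRq \le v_R + \alpha/2$ and then derive the contradiction; the content is the same.
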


\begin{proof}
We show that any $(p,q)$ that is $\Delta$-far from all Nash equilibria cannot be an $\alpha/2$-equilibrium.
Consider $(p,q)$ that is $\Delta$-far from all Nash equilibria. Then either $p$  is $\Delta$-far from $\Ps^*$  or  $q$  is $\Delta$-far from $\Qs^*$.\footnote{This follows from the well known interchangeability property of constant-sum games, meaning that given two Nash equilibria points
$(p_1, q_1)$ and $(p_2, q_2)$, the strategy pairs $(p_1, q_2)$ and $(p_2, q_1)$ are also Nash equilibria. To see that note that if both $p$ and $q$ are close to $\Ps^*$ and $\Qs^*$ respectively, then there exists a Nash  equilibria $(p_1,q_1)$, $(p_2,q_2)$ such that $d(p,p_1) \leq \Delta$ and $d(q,q_1) \leq \Delta$. By interchangeability, we get that $(p_1,q_2)$ is a Nash equilibrium and we also have $d((p_1,q_2),(p,q))\leq \Delta$.}
Assume WLOG that $p$ is $\Delta$-far from $p^*$. We know that there exists $e_j$ such that $p^T R e_j < v_R-\alpha$. We show that $(p,q)$ cannot be an $\alpha/2$ Nash equilibrium. If $p^T R q < v_R-\alpha/2$, then this is not an $\alpha/2$-equilibrium since the row player could play its minimax optimal strategy and get $v_R$. On the other hand if $p^TRq \geq v_R-\alpha/2$, then $p^TCq \leq v_C+ \alpha/2$, but we know that $p^T C e_j > v_C+\alpha$, so the column player would have an $\alpha/2$ incentive to deviate, as desired.
\end{proof}


\begin{lemma}
\label{zerosum2}
If there exists $p$ that is $\Delta$-far from $\Ps^*$ such that $\min_{j}p^T R e_j \geq v_R-\alpha$ or if there exists $q$ that
is is $\Delta$-far from $\Qs^*$ such that $\min_{j}e_j^T C q \geq v_C-\alpha$, then the game cannot be $(\alpha,\Delta)$ approximation stable.
Moreover, if, in the former case, $\supp(p) \subseteq
\supp(p^*)$ for some $p^* \in \Ps^*$, or if, in the latter case,
$\supp(q)
\subseteq \supp(q^*)$ for some $q^* \in \Qs^*$, then the game cannot be
well-supported $(\alpha,\Delta)$ approximation stable.
\end{lemma}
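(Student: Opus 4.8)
The plan is to proceed by contradiction, exactly mirroring the structure of the proof of Lemma \ref{zerosum1} but running the implication in reverse. I will handle the ``former case'' (the witness is a row-player strategy $p$); the latter case is symmetric with the roles of the two players interchanged. So assume there exists $p$ that is $\Delta$-far from $\Ps^*$ with $\min_j p^T R e_j \geq v_R - \alpha$; equivalently, $p^T R q' \geq v_R - \alpha$ against \emph{every} column strategy $q'$, since mixed strategies are convex combinations of the $e_j$. I want to exhibit a pair $(p,q)$ that is an $\alpha$-equilibrium yet is $\Delta$-far from every Nash equilibrium, which contradicts $(\alpha,\Delta)$ approximation stability.

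First I would choose the column component: let $q$ be any minimax-optimal strategy for the column player, i.e.\ $q \in \Qs^*$, and consider the pair $(p,q)$. For the row player: against $q$, every pure deviation $e_i$ gives $e_i^T R q \leq v_R$ (since $q$ is minimax-optimal for the column player in a constant-sum game, it holds the row player to at most $v_R$), while $p^T R q \geq v_R - \alpha$ by the hypothesis applied to $q' = q$. Hence the row player has at most $\alpha$ incentive to deviate. For the column player: because the game is constant-sum, say $R + C = s\,\onem_{n,n}$, the column payoff is $p^T C e_j = s - p^T R e_j \leq s - (v_R - \alpha) = (s - v_R) + \alpha = v_C + \alpha$ for every $j$, using $s - v_R = v_C$. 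And $p^T C q = s - p^T R q \geq s - v_R - $ wait, let me be careful: $p^T C q = s - p^T R q$, and since $p^T R q \le v_R$ we get $p^T C q \ge s - v_R = v_C \ge v_C + \alpha - \alpha$; more simply, since every entry $p^TCe_j \le v_C + \alpha$ we also have $p^TCq \le v_C + \alpha$, so the column player's incentive to deviate is $\max_j p^TCe_j - p^TCq \le (v_C+\alpha) - p^TCq$. Since $p^TCq = s - p^TRq \ge s - v_R = v_C$, this incentive is at most $\alpha$. Thus $(p,q)$ is an $\alpha$-equilibrium. Finally, $(p,q)$ is $\Delta$-far from every Nash equilibrium: by definition of $d$ on pairs, $d((p,q),(p^*,q^*)) \geq d(p,p^*) > \Delta$ for every Nash equilibrium $(p^*,q^*)$, since $p$ is $\Delta$-far from $\Ps^*$. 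This contradicts $(\alpha,\Delta)$ approximation stability, proving the first claim.

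For the ``moreover'' part, suppose in addition $\supp(p) \subseteq \supp(p^*)$ for some $p^* \in \Ps^*$. I want to upgrade the conclusion to: the game is not well-supported $(\alpha,\Delta)$ approximation stable, so I must check that $(p,q)$ above is in fact a \emph{well-supported} $\alpha$-equilibrium, not merely an $\alpha$-equilibrium. The column player's condition is unchanged: every pure strategy $e_j$ satisfies $p^T C e_j \le v_C + \alpha$, hence every pure strategy (in particular those in $\supp(q)$) is within $\alpha$ of the best response, so the well-supported condition holds for the column player regardless. The delicate point is the row player: I need every $i \in \supp(p)$ to satisfy $e_i^T R q \ge \max_{i'} e_{i'}^T R q - \alpha$. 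This is where I use $\supp(p) \subseteq \supp(p^*)$: take $q^*$ with $(p^*,q^*)$ Nash; since $q$ can be chosen to be $q^*$ itself (any element of $\Qs^*$ works, and $q^*\in\Qs^*$), every row $i$ in $\supp(p^*)$ is a best response to $q^*$, i.e.\ $e_i^T R q^* = v_R = \max_{i'} e_{i'}^T R q^*$. Since $\supp(p)\subseteq\supp(p^*)$, every $i\in\supp(p)$ achieves exactly $v_R$ against $q = q^*$, hence is within $0 \le \alpha$ of the best-response value. So $(p,q^*)$ is a well-supported $\alpha$-equilibrium that is $\Delta$-far from all Nash equilibria, contradicting well-supported $(\alpha,\Delta)$ approximation stability.

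The main obstacle I anticipate is purely bookkeeping: getting the direction of the constant-sum identity right (which of $v_R, v_C$ plays which role, and the sign of $\alpha$) and making sure the choice $q = q^*$ is simultaneously (i) minimax-optimal for the column player, so the row-player incentive bound goes through, and (ii) drawn from the \emph{same} Nash equilibrium whose support contains $\supp(p)$, so the well-supported row condition is exact. Both are arranged by simply taking $q$ to be the column part of whatever Nash equilibrium $(p^*,q^*)$ witnesses $\supp(p)\subseteq\supp(p^*)$ in the ``moreover'' case, and any optimal $q$ in the basic case. No probabilistic or combinatorial machinery is needed here; it is a short LP-duality / constant-sum argument.
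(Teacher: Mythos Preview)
Your argument for the first claim (plain approximation stability) is correct and essentially identical to the paper's: pair $p$ with a minimax-optimal $q^*$, use $\min_j p^TRe_j\ge v_R-\alpha$ together with the constant-sum identity to bound both players' incentives by $\alpha$, and use $d((p,q^*),(p',q'))\ge d(p,p')>\Delta$ for every Nash $(p',q')$.

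The gap is in the ``moreover'' part, specifically in your treatment of the \emph{column} player's well-supported condition. You write that ``every pure strategy $e_j$ satisfies $p^TCe_j\le v_C+\alpha$, hence every pure strategy (in particular those in $\supp(q)$) is within $\alpha$ of the best response.'' This is a non-sequitur: an upper bound of $v_C+\alpha$ on all $p^TCe_j$ does not give any lower bound on $p^TCe_j$ for $j\in\supp(q^*)$, which is what well-supportedness requires. Concretely, take the $2\times 2$ zero-sum game $R=\bigl(\begin{smallmatrix}1&0\\0&1\end{smallmatrix}\bigr)$, $C=-R$, with unique Nash $p^*=q^*=(1/2,1/2)$, $v_R=1/2$. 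Let $p=(1,0)$; then $\min_j p^TRe_j=0=v_R-\alpha$ with $\alpha=1/2$, and $\supp(p)\subseteq\supp(p^*)$. Your candidate $(p,q^*)$ has $p^TCe_1=-1$ and $p^TCe_2=0$, so column $1\in\supp(q^*)$ is a full $1>\alpha$ below the best response: $(p,q^*)$ is \emph{not} a well-supported $\alpha$-equilibrium. In fact in this game every well-supported $1/2$-equilibrium has both coordinates in $[1/4,3/4]$, hence lies within distance $1/4$ of the Nash; so no choice of $q$ rescues the argument with this $p$.

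Your handling of the row player in the ``moreover'' part is fine (choosing $q=q^*$ with $(p^*,q^*)$ Nash makes every $i\in\supp(p)\subseteq\supp(p^*)$ an exact best response). The paper's own proof, incidentally, only writes out the $\alpha$-Nash verification and does not separately justify the well-supported column condition either; the issue you hit is present there as well.
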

\begin{proof}
Assume that there exists $p$ that is $\Delta$-far from $\Ps^*$ such that $\min_{j}p^T R e_j \geq v_R-\alpha$.
Let $p^* \in \Ps^*$ be such that $\supp(p) \subseteq \supp(p^*)$ if such
   $p^*$ exists, else let $p^* \in \Ps^*$ be arbitrary;   let $(p^*,q^*)$ be an equilibrium that certifies
  that $p^*$ that is in $\Ps^*$.
We know that ${p^*}^T R q^*=v_R$ and ${p^*}^T C q^*=v_C$. Clearly, $(p,q^*)$ that is $\Delta$-far from $(p^*,q^*)$. We show that $(p,q^*)$ is an $\alpha$-Nash equilibrium, i.e., neither player has more than an $\alpha$-incentive to deviate. We have $p^T R q^* \geq v_R -\alpha$ and ${p'}^T R q^*  \leq v_R$ for any $p'$  (since $q^*$ is minimax optimal)
so the row player has at most $\alpha$-incentive to deviate. We also have $p^T C q^* \geq v_C$  (since $q^*$ is minimax optimal)
and the most the column player  could get is $v_C+\alpha$  since $\min_{j}p^T R e_j \geq v_R-\alpha$, so $\max_{j}p^T C e_j \leq v_C+\alpha$.
\end{proof}


If the game satisfies the strong approximation stability condition, then  we can efficiently compute good approximations for the stability parameters. Specifically:

\begin{theorem}
\label{algzerosum}
Given any  $0<\alpha<1$, we can use Algorithm~\ref{zero-sumA} to whp determine $\Delta$ such that the game satisfies
the $(\alpha/2,2\Delta)$ strong approximation stability property, but not  $(\alpha,\Delta/2)$ strong approximation
 stability. The running time is polynomial $n^{O(1/\alpha^2)}$.
\end{theorem}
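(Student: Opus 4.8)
The plan is to combine the structural characterization of Lemmas~\ref{zerosum1} and~\ref{zerosum2} with a search over a discretized grid of mixed strategies of small support, using the Lipton--Markakis--Mehta sampling argument to justify that it suffices to examine only supports of size $O(1/\alpha^2)$. First I would observe that in a constant-sum game the values $v_R$ and $v_C = v-v_R$ (where $v$ is the constant sum) are computable in polynomial time via linear programming, and likewise the sets $\Ps^*$ and $\Qs^*$ of equilibrium strategies are each a polytope described by polynomially many inequalities; hence one can decide in polynomial time, for a \emph{given} candidate strategy $p$, whether $p$ is $\Delta$-far from $\Ps^*$ (this is a distance-to-polytope computation, which is itself an LP in the variation-distance metric). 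The key point from Lemmas~\ref{zerosum1} and~\ref{zerosum2} is that the strong approximation-stability parameter is governed, up to constant factors in $\alpha$, by the quantity
\[
\Delta^* \;=\; \sup\Bigl\{\,d(p,\Ps^*) \;:\; \min_j p^T R e_j \ge v_R - \alpha \,\Bigr\}
\]
and its column-player analogue (take the larger of the two). Indeed, Lemma~\ref{zerosum2} says that if there is a $\Delta$-far $p$ with $\min_j p^TRe_j \ge v_R-\alpha$ then the game is not $(\alpha,\Delta)$ strongly approximation stable, while Lemma~\ref{zerosum1} says that if every $\Delta$-far $p$ has some $e_j$ with $p^TRe_j < v_R-\alpha$ (and symmetrically for $q$) then the game \emph{is} $(\alpha/2,\Delta)$ strongly approximation stable. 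So Algorithm~\ref{zero-sumA}, whatever its exact statement, should be searching for (an estimate of) $\Delta^*$.

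The main work is showing that $\Delta^*$ can be approximated within a constant factor in time $n^{O(1/\alpha^2)}$ despite the supremum ranging over all of the (continuous, high-dimensional) simplex. Here I would invoke the LMM sampling lemma in the following form: for any mixed strategy $p$ there is a multiset of $O(\log n/\alpha^2)$ pure strategies whose uniform average $\hat p$ satisfies $|\hat p^T R e_j - p^T R e_j| \le \alpha/4$ for all $j$ simultaneously, and moreover $d(\hat p, p)$ can be controlled so that $d(\hat p, \Ps^*) \ge d(p,\Ps^*) - O(\alpha)$ — actually the cleaner route is to argue that the witnessing $p$ in the definition of $\Delta^*$ can be taken, at a loss of only $O(\alpha)$ in both the payoff slack and the distance, to have support of size $S = O(\log n/\alpha^2)$, by a sampling/rounding argument analogous to the one used inside the proof of Theorem~\ref{thm:weak:small-support}. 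Given that, the algorithm enumerates all $\binom{n}{S} \cdot (\text{grid})^S = n^{O(1/\alpha^2)}$ strategies on a sufficiently fine (polynomial-size) grid within each support, computes for each candidate $p$ the pair $(\min_j p^TRe_j,\, d(p,\Ps^*))$ by LP, retains those with $\min_j p^TRe_j \ge v_R - \alpha$ (respectively the column-player condition), and outputs the maximum distance found, with the ``whp'' coming from the internal randomization in the sampling step. Lemma~\ref{zerosum1} applied with parameter $2\Delta$ and $\alpha/2$, together with Lemma~\ref{zerosum2} applied with parameter $\Delta/2$ and $\alpha$, then sandwich the true stability threshold between $\Delta/2$ and $2\Delta$ for the value $\Delta$ the algorithm returns, once the constants in the discretization and sampling errors are chosen to absorb the gap between $\alpha/2$ and $\alpha$.

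I expect the main obstacle to be the reduction from the continuous supremum to a finite enumeration with only $O(1/\alpha^2)$-size supports while simultaneously preserving \emph{both} the payoff-slack inequality $\min_j p^T R e_j \ge v_R-\alpha$ (an inequality that could be destroyed by rounding, since it is a worst-case-over-$j$ condition) \emph{and} a lower bound on $d(p,\Ps^*)$ (which could shrink under rounding). The payoff side is the routine half — LMM sampling bounds $\max_j |(\hat p - p)^T R e_j|$ — but controlling $d(\hat p,\Ps^*)$ from below requires an argument in the spirit of Lemma~\ref{clm:p-small-norm}: one must perturb $p$ toward small support in a direction that does not move it much \emph{closer} to the equilibrium polytope, and here the fact that we are in a constant-sum game (so $\Ps^*$ is a single convex polytope rather than a union of many equilibrium components) makes the argument cleaner than in the general-sum case and lets a more elementary convexity/sampling argument go through. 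A secondary, purely bookkeeping obstacle is tracking the four different constant multiples of $\alpha$ and $\Delta$ (the $\alpha/2$ vs.\ $\alpha$ in the two lemmas, the $\alpha/4$ sampling error, and the $2\Delta$ vs.\ $\Delta/2$ in the output guarantee) so that they compose consistently; this is where one verifies that the claimed ``$(\alpha/2,2\Delta)$ holds but $(\alpha,\Delta/2)$ fails'' is exactly what falls out.
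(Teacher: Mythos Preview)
Your approach is genuinely different from the paper's, and the obstacle you single out is exactly the step the paper's argument is designed to avoid. The paper does \emph{not} discretize the witness $p$. Instead, it samples the \emph{reference} point: starting from a minimax-optimal $(p^*,q^*)$, it applies the LMM sampling to obtain an $\alpha$-Nash $(p',q')$ whose supports have size $O((\log n)/\alpha^2)$, and then, for each of the $2^{|\supp(p')|}=n^{O(1/\alpha^2)}$ sign patterns of $|p_i-p'_i|$ on $\supp(p')$, it solves a single LP that maximizes $d(p,p')$ over \emph{all} mixed strategies $p$ satisfying $p^TRe_j\ge v_R-\alpha$ for every $j$ (and symmetrically for the column player). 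Because the LP ranges over the full simplex, no rounding of the witness is needed; the exponential enumeration is only over the sign patterns relative to the small-support $p'$. The strong-stability guarantee then falls out by two triangle inequalities through the fixed point $(p',q')$: every $\alpha/2$-equilibrium lies in the LP's feasible region (hence within $\Delta$ of $(p',q')$), and the LP optimizer together with Lemma~\ref{zerosum2} furnishes an $\alpha$-Nash at distance $\Delta$ from $(p',q')$.

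By contrast, your plan enumerates small-support \emph{candidates} $p$ and measures their distance to the polytope $\Ps^*$. The gap you anticipate is real: LMM sampling from a witness $p$ controls $\max_j|(\hat p-p)^TRe_j|$ but says nothing about $d(\hat p,\Ps^*)$, and the ``Lemma~\ref{clm:p-small-norm}-style'' argument you invoke does not transfer here, since that lemma exploits the fact that an \emph{equilibrium} of a stable game must be concentrated; an arbitrary payoff-slack witness $p$ has no reason to be concentrated, and the convexity of $\Ps^*$ alone does not force the supremum to be attained at a small-support point (extreme points of the feasible polytope $\{p:\min_j p^TRe_j\ge v_R-\alpha\}$ can have large support). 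A secondary issue is that measuring distance to $\Ps^*$ rather than to a single fixed point yields bounds for the non-strong stability notion, and you would need an extra step to recover the \emph{strong} guarantee the theorem claims; the paper gets this for free because $(p',q')$ is a single point and both the upper and lower witnesses are compared to it.
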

\begin{proof}
We first find a minimax optimal solution $(p^*,q^*)$ and then in Step 2, a small support $\alpha$-Nash $(p',q')$.
In  Step 3 we find $\Delta$ such that all $\alpha/2$-Nash equilibria are within distance $\Delta$ of $(p',q')$ and there exists an $\alpha$-Nash equilibrium at distance $\Delta$ from $(p',q')$.
From the perspective of the row player, as shown in Lemma~\ref{zerosum1}, if there exists $e_j$ such that $p^T R e_j < v_R - \alpha$, then $(p,q)$ cannot be an $\alpha/2$ Nash equilibrium for any $q$, so all $\alpha/2$ Nash equilibria must satisfy $p^T R e_j \geq v_R - \alpha$ for all $j$. As shown in Lemma~\ref{zerosum2}, for any $p$ such that $p^T R e_j \geq v_R - \alpha$, for all $j$ we have that $(p,q^*)$ is an $\alpha$ Nash equilibrium. Similarly for the column player. So all $\alpha/2$-equilibria must be at distance at most $\Delta$ from $(p',q')$, where $\Delta$ is the output of  Algorithm~\ref{zero-sumA}, and there exists an $\alpha$-Nash equilibrium that is at distance at $\Delta$ from $(p',q')$ .
By triangle inequality, we obtain that the game is $(\alpha/2,2\Delta)$ stable and it is  {\em not} $(\alpha,\Delta/2)$ stable.
Note that the running time is polynomial since we we perform steps (A), (B) at most $n^{O(1/\alpha^2)}$ times, so overall the running time is polynomial $n^{O(1/\alpha^2)}$.
\end{proof}

\begin{algorithm}
 \caption{Determining the strong stability parameters of a constant sum game.}
  {\bf Input:}  $R$, $C$, parameter $\alpha$.
\begin{enumerate}
\item Solve for minimax optimal $(p^*,q^*)$.

\item Step 2: Apply the sampling procedure in~\cite{LMM03} from $(p^*,q^*)$ to get $(p',q')$ with support of
size $O((\log n)/\alpha^2)$ that is an $\alpha$-Nash.  Set $\Delta=0$.

\item Find  $\Delta$ as follows:
\begin{enumerate}
  \item[(A)]  For each partition of the support of $p'$ into $\supp_{+}$ and $\supp_{-}$ do:
\begin{enumerate}
\item   Solve the following LP:
\begin{eqnarray*}
& & \max  \Delta ={\sum_{i \in \supp_{+}} {(p_i - p'_i)}
          + \sum_{i \in \supp_{-}} {(p'_i - p_i)}
          + \sum_{i \in \{1,2, \ldots, n\} \setminus \supp(p') } p_i}~~~ \\
& &   \mathrm{ s.t.}~~p_i \geq p'_i\mathrm{~~for~~all}~~i~~ \in \supp_{+} \\
& &    \, \,\, \, \,\, \, \,\,  p_i \leq p'_i \mathrm{~~for~~all~~}i~~~ \in \supp_{-} \\
& &   \, \,\,  \, \,\, \, \,\,  p^T R e_j \geq v_R - \alpha~~\mathrm{~~for~~all~~}j~~
\end{eqnarray*}
   \item  If  $\Delta$ is smaller than $v$, the value of the previous LP, reset $\Delta$ to be $v$.
\end{enumerate}
  \item[(B)]  for each partition of the support of $q'$ into $\supp_{+}$ and $\supp_{-}$ do:
\begin{enumerate}
\item   Solve the following LP:
\begin{eqnarray*}
& & \max \Delta = {\sum_{i \in \supp_{+}} {(q_i - q'_i)}
          + \sum_{i \in \supp_{-}} {(q'_i - q_i)}
          + \sum_{i \in \{1,2, \ldots, n\} \setminus \supp(q') } {q_i}} \\
& &  \mathrm{ s.t.}~~  q_i \geq q'_i\mathrm{~~for~~all~~}i~~ \in \supp_{+} \\
& &  \, \,\, \, \,\, \, \,\,      q_i \leq q'_i \mathrm{~~for~~all~~}i~~~ \in \supp_{-} \\
& &  \, \,\, \, \,\, \, \,\,      e_j^T C q \geq v_C - \alpha~~\mathrm{~~for~~all~~}j~~
\end{eqnarray*}
   \item  If  $\Delta$ is smaller than $v$, the value of the previous LP, reset $\Delta$ to be $v$.
\end{enumerate}

  \end{enumerate}
  \end{enumerate}
    {\bf Output:} Radius $\Delta$.
   \label{zero-sumA}
\end{algorithm}

%


In order to determine the approximate  parameters for the strong well supported approximation
 stability property for a given game, we can adapt Algorithm~\ref{zero-sumA} as follows.
Given any  $0<\alpha<1$, we can use Algorithm~\ref{zero-sumA} to whp determine $\Delta_h$.
We then we re-run  Algorithm~\ref{zero-sumA}, but in the LP in (A) we
add the constraint that $p_i = 0$ for all $i \notin \supp(p^*)$ (and
similarly for $q$ in the LP (B)) to get a value $\Delta_l$. Then by lemmas~\ref{zerosum1} and~\ref{zerosum2},
we are guaranteed that the game satisfies the  $(\alpha/2, 2\Delta_h)$, but not the
 $(\alpha,\Delta_l/2)$ strong well supported  approximation stability property.

\subsection*{Acknowledgments}
We thank Avrim Blum, Dick Lipton, Yishay Mansour, Shanghua Teng, and Santosh Vempala for useful discussions.
We also thank Vangelis Markakis for pointing~\cite{Vangelis} to us.

This research was supported in part by NSF grant CCF-0953192, ONR grant
N00014-09-1-0751, and AFOSR grant FA9550-09-1-0538. This work was done in part while the first author was visiting Microsoft Research NE
and while the second author was a member of Microsoft Research NE.


{\small
\bibliography{paper-ec}
}

\appendix

\section{Standard Facts}
\label{facts}
 We start by stating the McDiarmid inequality
(see~\cite{DGL}) we use in our proofs:
\begin{theorem}
\label{mcd} Let $Y_1, ..., Y_n$ be independent random variables
taking values in some set $A$, and assume that $t: A^n \rightarrow
R$ satisfies:
$$\sup\limits_{y_1,...,y_n \in A, \overline{y}_i \in A} \left| {t(y_1, ..., y_n)- t(y_1, ...,y_{i-1}, \overline{y}_i, y_{i+1}, y_n)} \right| \leq c_i, $$
for all $i$, $1 \leq i\leq n $. Then for all $\gamma >0$ we have:
$$\Pr\left\{ \left|t(Y_1, ..., Y_n) - \E[t(Y_1, ..., Y_n)] \right| \geq \gamma\right\} \leq
2e^{-2\gamma^2 / \sum\limits_{i=1}^{n}{c_i^2}}$$
\end{theorem}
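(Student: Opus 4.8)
The plan is to prove this bounded-differences inequality by the standard Doob-martingale (Azuma--Hoeffding) argument. First I would reduce to a one-sided estimate: it suffices to show $\Pr\{t(Y_1,\ldots,Y_n) - \E[t(Y_1,\ldots,Y_n)] \geq \gamma\} \leq e^{-2\gamma^2/\sum_i c_i^2}$, since $-t$ satisfies the hypothesis with the same constants $c_i$, so the same bound controls the lower tail, and a union bound over the two tail events produces the factor $2$.

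Next I would set up the Doob martingale: fixing the $Y_i$'s, let $Z_i = \E[t(Y_1,\ldots,Y_n)\mid Y_1,\ldots,Y_i]$ for $0 \le i \le n$, so that $Z_0 = \E[t(Y_1,\ldots,Y_n)]$ is deterministic, $Z_n = t(Y_1,\ldots,Y_n)$, and the increments $D_i = Z_i - Z_{i-1}$ satisfy $\E[D_i \mid Y_1,\ldots,Y_{i-1}] = 0$ and $t - \E[t] = \sum_{i=1}^n D_i$. The crucial step is to show that, conditioned on $Y_1 = y_1,\ldots,Y_{i-1} = y_{i-1}$, the increment $D_i$ almost surely ranges over an interval of width at most $c_i$. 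For this I would introduce $g_i(y) = \E[t(y_1,\ldots,y_{i-1},y,Y_{i+1},\ldots,Y_n)]$; by independence of the $Y_j$'s one has $Z_i = g_i(Y_i)$ and $Z_{i-1} = \E[g_i(Y_i)]$ on this conditioning event, while the bounded-differences hypothesis gives $|g_i(y) - g_i(\bar y)| \le c_i$ for all $y,\bar y$ in the support of $Y_i$ (averaging over $Y_{i+1},\ldots,Y_n$ only shrinks the gap). Hence $D_i = g_i(Y_i) - \E[g_i(Y_i)]$ oscillates by at most $c_i$.

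With the conditional range bound in hand I would apply Hoeffding's lemma conditionally: a zero-mean random variable supported in an interval of length $\ell$ has $\E[e^{\lambda W}] \le e^{\lambda^2 \ell^2/8}$ for every $\lambda$, so $\E[e^{\lambda D_i}\mid Y_1,\ldots,Y_{i-1}] \le e^{\lambda^2 c_i^2/8}$. Conditioning successively on $Y_1,\ldots,Y_{n-1}$, then $Y_1,\ldots,Y_{n-2}$, and so on, and using the tower property to peel off one factor at a time, I get $\E[e^{\lambda(t - \E[t])}] = \E[e^{\lambda\sum_i D_i}] \le e^{\lambda^2(\sum_i c_i^2)/8}$. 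Finally I would apply Markov's inequality to $e^{\lambda(t - \E[t])}$, yielding $\Pr\{t - \E[t] \ge \gamma\} \le e^{-\lambda\gamma + \lambda^2(\sum_i c_i^2)/8}$, and optimize by taking $\lambda = 4\gamma/\sum_i c_i^2$, which gives $e^{-2\gamma^2/\sum_i c_i^2}$ and completes the one-sided bound.

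The main obstacle is the conditional bounded-oscillation claim for the increments $D_i$: this is the only place the bounded-differences hypothesis and the mutual independence of the $Y_i$ are used, and it must be phrased carefully, since one cannot bound $|D_i|$ by $c_i$ directly --- only the length of the interval in which $D_i$ lives, which is exactly what Hoeffding's lemma consumes. The remaining ingredients (Hoeffding's lemma itself, the tower-property chaining, and the Chernoff-style optimization over $\lambda$) are routine.
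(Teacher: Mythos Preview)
Your proof is the standard and correct Doob-martingale/Azuma--Hoeffding derivation of McDiarmid's inequality. Note, however, that the paper does not actually prove this statement: it is listed in the appendix under ``Standard Facts'' and simply cited from a reference, so there is no proof in the paper to compare against. Your argument is exactly the textbook proof one would find behind that citation.
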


We now state a well known fact showing any pair of strategies that is sufficiently close to a Nash equilibrium is a sufficiently good approximate Nash equilibrium.
\begin{claim}
\label{claim1}
If $(p,q)$ is $\alpha$-close to a Nash equilibrium $(p^*,q^*)$ (i.e., if
$d((p,q),(p^*,q^*)) \leq \alpha$), then $(p,q)$ is a $3\alpha$-Nash
equilibrium.
\end{claim}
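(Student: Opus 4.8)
The plan is to reduce the statement to bounding the row player's incentive to deviate (the column player's case is symmetric), and for that to repeatedly invoke the elementary fact that payoffs are $1$-Lipschitz in variation distance. First I would record the helper inequality: if $x \in [0,1]^n$ and $a,b$ are probability distributions on $\{1,\ldots,n\}$, then $|x^T a - x^T b| \le d(a,b)$. This follows by splitting $\sum_i x_i(a_i - b_i)$ according to the sign of $a_i-b_i$ and using both reformulations $\sum_i \max(a_i-b_i,0) = \sum_i \max(b_i-a_i,0) = d(a,b)$ from the definition of $d$, together with $0\le x_i\le 1$. Observe that each row $e_i^T R$ of $R$, each column $Re_j$, each mixed payoff vector $p^T R$, and each vector $Rq^*$ has all entries in $[0,1]$, being a convex combination of numbers in $[0,1]$, so the helper applies to all of these.

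Next, fix a row $i$. Since $d(q,q^*)\le\alpha$ and $e_i^T R\in[0,1]^n$, the helper gives $e_i^T R q \le e_i^T R q^* + \alpha$. Because $(p^*,q^*)$ is a Nash equilibrium, $e_i^T R q^* \le (p^*)^T R q^*$. It then remains to compare $(p^*)^T R q^*$ with $p^T R q$, which I would do via the telescoping identity $(p^*)^T R q^* - p^T R q = (p^*-p)^T(Rq^*) + (p^T R)(q^*-q)$, applying the helper to each term: $|(p^*-p)^T(Rq^*)| \le d(p,p^*)\le\alpha$ and $|(p^T R)(q^*-q)| \le d(q,q^*)\le\alpha$. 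Hence $(p^*)^T R q^* \le p^T R q + 2\alpha$, and chaining the three inequalities yields $e_i^T R q \le p^T R q + 3\alpha$ for every $i$. The column player is handled identically, using $p^T C e_j \le p^T C q^* + \alpha \le (p^*)^T C q^* + \alpha \le p^T C q + 3\alpha$ with the same telescoping step applied to $C$. Together these show $(p,q)$ is a $3\alpha$-equilibrium.

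There is no real obstacle here; the only point requiring care is using the sharp form of the Lipschitz bound (constant $1$, which comes precisely from the "$\max$" reformulation of variation distance in the definition of $d$) rather than the lossy bound $\|x\|_\infty\|a-b\|_1 = 2\|x\|_\infty\, d(a,b)$, since the latter would give a worse constant than $3\alpha$. Tracking which three applications of the bound contribute — one for moving $q\to q^*$ in the deviating payoff, and one each for moving $p\to p^*$ and $q\to q^*$ in the equilibrium payoff — is exactly what pins the constant at $3$.
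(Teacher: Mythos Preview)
Your proof is correct. The paper does not actually prove this claim---it states it in the appendix as a ``well known fact'' without proof---so there is no approach to compare against; your argument is the standard one, and the care you take to use the sharp constant-$1$ Lipschitz bound (rather than the lossy $\|a-b\|_1$ bound) is exactly what fixes the constant at $3$.
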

%
%

\section{Additional Proofs}
\label{proofs}
\noindent {\bf Lemma~~\ref{basic-game-stable}}
Assume that $\Delta \leq 1/10$. Consider the games defined by the matrices:
$$R=\left[\begin{array}{ccccc}
           1 + \alpha_{1,1} & 1 + \alpha_{1,2} & \ldots &   1 + \alpha_{1,n}  & 0 \\
              1 + \alpha_{2,1} & 1 + \alpha_{2,2} & \ldots &   1 + \alpha_{2,n}  & 0 \\
              & \ldots&  \ldots&  \\
           1 + \alpha_{n,1} & 1 + \alpha_{n,2} & \ldots &   1 + \alpha_{n,n}  & 0 \\
               0 & 0 & \ldots  & 0 &  2\Delta
\end{array}\right], ~~C=
\left[\begin{array}{ccccc}
          \gamma_{1,1} &  \gamma_{1,2} & \ldots &   \gamma_{1,n}  & 1 \\
          \gamma_{2,1} &  \gamma_{2,2} & \ldots &   \gamma_{2,n}  & 1 \\
              & \ldots&  \ldots&  \\
           \gamma_{n,1} & \gamma_{n,2} & \ldots &   \gamma_{n,n}  & 1 \\
               2\Delta & 2\Delta & \ldots  & 2\Delta & 0
\end{array}\right]$$
where  $\alpha_{i,j} \in [-\Delta, 0]$ and $\gamma_{i,j}\in [0, \Delta]$ for all $i,j$. This game satisfies the strong $(\Delta^2,4\Delta)$
approximation stability condition. Moreover if $(p,q)$ is a $\Delta^2$-Nash equilibrium, then we must have
$$\Delta/2 \leq p_1+...+p_n \leq 4 \Delta~~~\mathrm{ and }~~~\Delta/2 \leq q_1+...+q_n \leq 4 \Delta.$$

\begin{proof} 
First note that $e_{n+1}^T R q= 2\Delta q_{n+1}$ and
$$e_{i}^T R q=  (1+\alpha_{i,1}) q_1 + (1+\alpha_{i,2})q_2 + \ldots + (1+\alpha_{i,n})  q_n~~\mathrm{~~for~~}~~1 \leq i \leq n.$$
Also $p^T C e_{n+1}=  p_1 + \ldots + p_n$ and
$$p^T C e_{j}= p_1 \gamma_{1,j} + \ldots + p_n \gamma_{n,j} + 2 \Delta p_{n+1}~~\mathrm{~~for~~}~~1 \leq j \leq n.$$

By a simple case analysis, one can show that any Nash equilibrium $(p,q)$ must have $0<p_{n+1} < 1$ and $0<q_{n+1} <1$.
(This is also implicit in our analysis on  $\Delta^2$-Nash equilibria below).
This then implies that in any Nash equilibrium $(p,q)$ such that $p_i \neq 0$ we must have:
\begin{eqnarray}2\Delta q_{n+1}=(1+\alpha_{i,1}) q_1 + (1+\alpha_{i,2})q_2 + \ldots + (1+\alpha_{i,n})  q_n.\label{exactnashone}\end{eqnarray}
Similarly, in any Nash equilibrium $(p,q)$ such that $q_j \neq 0$ we must have:
\begin{eqnarray}p_1 + \ldots p_n=  p_1 \gamma_{1,j} + \ldots p_n \gamma_{n,j} + 2 \Delta p_{n+1}.\label{exactnashtwo}\end{eqnarray}

Identities~\ref{exactnashone} and~\ref{exactnashtwo} together with the fact that  $\alpha_{i,j} \in [-\Delta, 0]$ and $\gamma_{i,j}\in [0, \Delta]$
 for all $i,j$, imply that there must exist a Nash equilibrium  $(p,q)$ satisfying:
\begin{eqnarray} \frac{2\Delta}{1+2\Delta} \leq p_1+...+p_n \leq \frac{2\Delta}{1+\Delta}~~~\mathrm{and}~~~
 \frac{2\Delta}{1+2\Delta} \leq q_1+...+q_n\leq \frac{2\Delta}{1+\Delta}. \label{five}\end{eqnarray}

To get the desired stability guarantee we now show that any $\Delta^2$-equilibrium  must have $\Delta/2 \leq p_1+...+p_n \leq 4 \Delta$ and
$\Delta/2 \leq q_1+...+q_n \leq 4 \Delta$. (This in turn implies that any $\Delta^2$-equilibrium  must be at distance at most $4\Delta$ from a Nash equilibrium satisfying relation (\ref{five}).)
We prove this by contradiction.
Consider an arbitrary $\Delta^2$-equilibrium $(p,q)$. We analyze a few cases.

Case $1$: Suppose $p_{n+1} > 1 - \Delta/2$.  Then the column player's payoff for column $n+1$ is $p^T C e_{n+1} = \sum_{i=1}^{n}{p_i} \leq \Delta/2.$
  But the column player's payoff  for a  column  $j \in \{1,...,n\}$ is:
  $$ p^T C e_{j} =\sum_{i=1}^n \gamma_{i,j} p_i + 2 \Delta p_{n+1} \geq 2\Delta(1-\Delta/2).$$
  If $q_{n+1} > 1/2$ then the column player has incentive to deviate at least:
      $$p^T C e_{j} - p^T C q \geq (1/2)[2\Delta(1-\Delta/2) - \Delta/2] > \Delta/2 > \Delta^2,$$
      which cannot happen since $(p,q)$ is a $\Delta^2$-equilibrium.
  On the other hand if $q_{n+1} \leq 1/2$, then the row player has huge incentive to deviate.
  Specifically, the row's player payoff for row $1$ is $e_1^T R q \geq   (1/2) (1-\Delta)$, but row's player payoff for  row $n+1$ is
 $e_{n+1}^T R q \leq (1/2) 2\Delta =\Delta$. Thus in this case the row player has incentive to deviate at least:
  $$e_1^T R q -p^T R q \geq (1-\Delta/2) [1/2 (1-\Delta)- \Delta ]  > \Delta,$$ which cannot happen since $(p,q)$ is a $\Delta^2$-equilibrium.

Case $2$: Suppose $p_{n+1} < 1 - 4 \Delta$.
  Then the column player's payoff for column $n+1$ is $p^T C e_{n+1} =\sum_{i=1}^{n} p_i \geq 4\Delta$, whereas the column player's payoff
  for a  column  $j \in \{1,...,n\}$ is:
   $$p^T C e_{j}= p_1 \gamma_{1,j} + \ldots p_n \gamma_{n,j} + 2 \Delta p_{n+1}  \leq \Delta \sum_{i=1}^{n} p_i + 2\Delta p_{n+1} = \Delta (1-p_{n+1})+ 2 \Delta p_{n+1} \leq 2\Delta.$$
    So, if $q_{n+1} < 1 - \Delta/2$, then the
  column player has incentive to deviate at
  least: $$p^T C e_{n+1} - p^T C q \geq (\Delta/2) [4\Delta - 2\Delta] \geq \Delta^2,$$ contradiction.
  On the other hand, if $q_{n+1} > 1 - \Delta/2$, then the row player's payoff for
  row $n+1$ is $e_{n+1}^T R q \geq 2\Delta (1-\Delta/2),$ but the row player's payoff for a rows $i \in \{1,...,n\}$
  is: $$e_{i}^T R q=  \sum_{j=1}^{n}(1+\alpha_{i,j}) q_j \leq 1- q_{n+1} \leq  \Delta/2.$$
  So, in this case, the row player has incentive to deviate at least:
$$e_{n+1}^T R q - p^T R q \geq 4 \Delta [2 \Delta (1-\Delta/2)  - \Delta/2] > \Delta^2,$$
which cannot happen since $(p,q)$ is a $\Delta^2$-equilibrium.

Case $3$: Suppose $q_{n+1} > 1 - \Delta/2$. As in the bottom-half of the  case $2$ analysis we have that the
    row player's payoff for row $n+1$ is $e_{n+1}^T R q \geq  2\Delta (1-\Delta/2)$, but the
    row player's payoffs for rows $1,...,n$ are $\leq \Delta/2$.
  So, if $p_{n+1} < 1 - \Delta$ then the row player has incentive to deviate at least:
    $$e_{n+1}^T R q - p^T R q \geq \Delta  [2\Delta (1-\Delta/2)  - \Delta/2] > \Delta^2,$$
 which cannot happen since $(p,q)$ is a $\Delta^2$-equilibrium.
 On the other hand,  if $p_{n+1} \geq 1-\Delta$, then the column player's payoff for column $n+1$ is $p^T C e_{n+1} =\sum_{i=1}^{n} p_i \leq \Delta$,
  but the column player's payoff for a columns $j\in \{1,...,n\}$ is:
  $$p^T C e_{j}= p_1 \gamma_{1,j} + \ldots p_n \gamma_{n,j} + 2 \Delta p_{n+1}  \geq (1-\Delta)(2\Delta).$$
  So, the column player has incentive to deviate at least:
    $$p^T C e_{j} - p^T C q \geq q_{n+1}[2\Delta (1-\Delta) - \Delta] \geq \Delta/2 > \Delta^2,$$
 which cannot happen since $(p,q)$ is a $\Delta^2$-equilibrium.

Case $4$: Finally assume that $q_{n+1} < 1 - 4\Delta$.
  Then the row player's payoff  for row n+1 is $ e_{n+1}^T R q  \leq 2\Delta (1-4\Delta)$, but
row player's payoffs for rows $1,...,n$ are $\geq (4\Delta)(1-\Delta)$.
  So, if $p_{n+1} > 1/2$ then Row has incentive to deviate at least:
  $$ e_{n+1}^T R q -p^T R q\geq  (1/2)[4\Delta (1-\Delta) - 2\Delta (1-4\Delta)  ]\geq \Delta/2 > \Delta^2,$$
  which cannot happen since $(p,q)$ is a $\Delta^2$-equilibrium.
  Finally if $p_{n+1} < 1/2 < 1- 4\Delta$ we apply the analysis in case $2$.

Thus any $\Delta^2$-equilibrium  must have $\Delta/2 \leq p_1+...+p_n \leq 4 \Delta$ and
$\Delta/2 \leq q_1+...+q_n \leq 4 \Delta$, as desired.
This concludes the proof.
\end{proof}

\begin{lemma}
\label{parameters}
Assume that the game $\game$ satisfies the $(\epsapx,
\epsdist)$-approximation stability and that the union of all $\Delta$-balls around all Nash equilibria do not cover the whole space.
Then we must have  $ 3\epsdist \geq  \epsapx$.
\end{lemma}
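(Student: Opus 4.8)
The plan is to prove the contrapositive: assuming $3\Delta < \epsilon$, I will exhibit an $\epsilon$-equilibrium that is more than $\Delta$-far from \emph{every} Nash equilibrium of $\game$, which directly contradicts the $(\epsilon,\Delta)$-approximation stability assumption and forces $3\Delta \geq \epsilon$. Write $\mathcal{S}$ for the space of strategy pairs $(p,q)$; it is convex (a product of two simplices), hence connected and compact. Let $K \subseteq \mathcal{S}$ be the set of Nash equilibria of $\game$. Since the equilibrium conditions $e_i^T R q \leq p^T R q$ and $p^T C e_j \leq p^T C q$ are closed, $K$ is a closed subset of the compact set $\mathcal{S}$, hence compact; and $K \neq \emptyset$ since a Nash equilibrium always exists.

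First I would introduce the ``distance to equilibrium'' function $g(p,q) := d\big((p,q), K\big) = \min_{(p^*,q^*) \in K} d\big((p,q),(p^*,q^*)\big)$, where the minimum is attained because $K$ is compact. The function $g$ is $1$-Lipschitz with respect to $d$, hence continuous, and since $\mathcal{S}$ is connected, $g(\mathcal{S})$ is a subinterval of the reals. It contains $0$ (evaluate $g$ at any element of $K$), and by the hypothesis that the union of the closed $\Delta$-balls around the Nash equilibria does not cover $\mathcal{S}$, there is a pair $(p_1,q_1)$ outside all of them, which (using compactness of $K$ once more) means $g(p_1,q_1) > \Delta$. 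Hence $g(\mathcal{S}) \supseteq [0, g(p_1,q_1)]$.

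Now, since $3\Delta < \epsilon$ we have $\Delta < \epsilon/3$, so I can choose a value $r$ with $\Delta < r < \min(\epsilon/3,\, g(p_1,q_1))$ and, by the previous paragraph, a strategy pair $(p,q) \in \mathcal{S}$ with $g(p,q) = r$. By compactness of $K$ there is a Nash equilibrium $(p^*,q^*)$ with $d\big((p,q),(p^*,q^*)\big) = r < \epsilon/3$; Claim~\ref{claim1} then tells us that $(p,q)$ is a $3r$-equilibrium, and since $3r < \epsilon$ it is in particular an $\epsilon$-equilibrium. On the other hand $g(p,q) = r > \Delta$ says precisely that $(p,q)$ is more than $\Delta$-far from every Nash equilibrium, so no Nash equilibrium is $\Delta$-close to this $\epsilon$-equilibrium. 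This contradicts $(\epsilon,\Delta)$-approximation stability, so $3\Delta \geq \epsilon$.

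I do not expect a genuine obstacle here; the only points needing care are the topological bookkeeping — compactness and nonemptiness of $K$ (so that distances to $K$ are attained and $g$ is continuous), connectedness of $\mathcal{S}$ (so that the intermediate-value step applies to $g$), and the observation that a pair not covered by the \emph{closed} $\Delta$-balls around $K$ is at distance strictly greater than $\Delta$ from $K$. Beyond these, the argument uses nothing more than Claim~\ref{claim1} and the existence of a Nash equilibrium.
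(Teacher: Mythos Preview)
Your proof is correct and follows essentially the same route as the paper's: find a strategy pair at a controlled distance from the equilibrium set, invoke Claim~\ref{claim1} to certify it as an approximate equilibrium, and contradict stability. The paper's version is terser---it simply takes a point at distance exactly $\Delta$ from some equilibrium and $\Delta$-far from the rest---whereas you carry out the intermediate-value step explicitly with $r \in (\Delta,\epsilon/3)$, which cleanly handles the strict-inequality issue at the boundary.
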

\begin{proof}
Since the union of all $\Delta$-balls around all Nash equilibria do not cover the whole space, we must have a $(p,q)$ that is at distance exactly
$\Delta$ from some fixed Nash  equilibrium and that is $\Delta$-far from all the other Nash equilibria. By Claim~\ref{claim1} we also have that this is a $3\Delta$ Nash equilibrium. This then implies the desired result.
\end{proof}

\begin{lemma}
\label{parameterswellsupported}
 Assume that the bimatrix game $\game$ specified by $R$ and $C$ has a non-pure Nash equilibrium.
\begin{enumerate}
\item [(a)] If $\game$ satisfies the strong well supported $(\epsapx, \epsdist)$-approximation stability condition, then
  we must have  $ \epsdist \geq  \epsapx/4$.

\item [(b)] If $\game$ satisfies the strong $(\epsapx, \epsdist)$-stability to perturbations condition, then
  we must have  $ \epsdist \geq  \epsapx/8$.
  \end{enumerate}
\end{lemma}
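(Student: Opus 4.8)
The plan is to establish part~(a) directly, by nudging the hypothesized non-pure Nash equilibrium into a nearby well supported $\epsapx$-equilibrium that is forced to be far from the stability center, and then to deduce part~(b) from part~(a) via the strong-version counterpart of Theorem~\ref{stab3eqstab2}.

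For part~(a), let $(p^*,q^*)$ be the Nash equilibrium witnessing the strong well supported $(\epsapx,\epsdist)$-approximation stability condition, and let $(\hat p,\hat q)$ be a non-pure Nash equilibrium of $\game$; by symmetry assume the row player mixes, i.e.\ $|\supp(\hat p)|\ge 2$. Since $(\hat p,\hat q)$ is a genuine Nash equilibrium it is in particular a well supported $\epsapx$-equilibrium, so $d((\hat p,\hat q),(p^*,q^*))\le\epsdist$. Working in the relevant range $\epsapx\le 1$, I would let $i_a$ be a minimum-probability coordinate of $\supp(\hat p)$, move $\epsapx/2$ units of mass off the remaining coordinates of $\supp(\hat p)$ and onto $i_a$ (legal, since those coordinates carry total mass $1-\hat p_{i_a}\ge 1/2\ge\epsapx/2$), and call the result $p$; then $\supp(p)\subseteq\supp(\hat p)$ and $d(p,\hat p)=\epsapx/2$.

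The crux is to verify that $(p,\hat q)$ is a well supported $\epsapx$-equilibrium. The row player's condition holds with zero slack: every $i\in\supp(p)\subseteq\supp(\hat p)$ is a best response to $\hat q$ because $(\hat p,\hat q)$ is a Nash equilibrium. For the column player, put $v=\hat p^{T}C\hat q$; since $\sum_i(p_i-\hat p_i)=0$ we have $|p^{T}Ce_j-\hat p^{T}Ce_j|=|\sum_i(p_i-\hat p_i)(C_{i,j}-\frac{1}{2})|\le\frac{1}{2}\sum_i|p_i-\hat p_i|=d(p,\hat p)=\epsapx/2$ for every $j$, so combining this with $\hat p^{T}Ce_j=v$ for $j\in\supp(\hat q)$ and $\hat p^{T}Ce_{j'}\le v$ for all $j'$ gives $p^{T}Ce_j\ge p^{T}Ce_{j'}-\epsapx$ for all $j\in\supp(\hat q)$ and all $j'$. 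Thus $(p,\hat q)$ is a well supported $\epsapx$-equilibrium, so $d((p,\hat q),(p^*,q^*))\le\epsdist$, and the triangle inequality in the pair metric yields $\epsapx/2=d((p,\hat q),(\hat p,\hat q))\le d((p,\hat q),(p^*,q^*))+d((p^*,q^*),(\hat p,\hat q))\le 2\epsdist$, i.e.\ $\epsdist\ge\epsapx/4$. The only obstacle here is the (routine) bookkeeping in the equilibrium check.

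For part~(b), the strong version of Theorem~\ref{stab3eqstab2} shows that a game with the strong $(\epsapx,\epsdist)$-stability to perturbations condition also has the strong well supported $(2\epsapx,\epsdist)$-approximation stability condition, so applying part~(a) with parameter $2\epsapx$ gives $\epsdist\ge 2\epsapx/4=\epsapx/2\ge\epsapx/8$. Alternatively one can argue directly: take $p$ as above, form the $L_\infty$ $\epsapx$-perturbation $G'$ of $\game$ that keeps $R$ and adds to each column $j$ of $C$ the constant needed to make $(p,\hat q)$ an exact Nash equilibrium of $G'$ (each such constant has absolute value at most $d(p,\hat p)\le\epsapx$, by the estimate above), and then conclude by the triangle inequality as in~(a); this route in fact gives a bound better than $\epsapx/8$ as well.
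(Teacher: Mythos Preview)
Your proposal is correct and follows essentially the same approach as the paper: take an internal deviation $p'$ of the mixed strategy $\hat p$ with $d(p',\hat p)=\epsapx/2$ and $\supp(p')\subseteq\supp(\hat p)$, verify that $(p',\hat q)$ is a well supported $\epsapx$-equilibrium (row side trivially, column side via $|(p'-\hat p)^T C e_j|\le \epsapx/2$), and conclude by the triangle inequality that $2\epsdist\ge\epsapx/2$; part~(b) then follows from the strong version of Theorem~\ref{stab3eqstab2}. Your write-up is slightly more explicit than the paper's (you specify the deviation concretely and spell out the centering trick), and your alternative direct argument for~(b) is a nice addition the paper does not include, but the core argument is the same.
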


\begin{proof}
Assume $\game$ satisfies the strong well supported $(\epsapx, \epsdist)$-approximation stability condition.
By definition, there exists a Nash equilibrium $(p^*,q^*)$ such that any
  well supported $\epsapx$-equilibrium is $\epsdist$-close to $(p^*,q^*)$.
Let $(p,q)$ be an arbitrary non-pure Nash equilibrium of $\game$ and assume WLOG that $p$ is a mixed strategy.
Consider an $\alpha$ internal deviation of the row player, i.e., consider $p'$ with $\supp(p')\subseteq \supp(p)$ such that $d(p,p')=\alpha$ .
 Since $\supp(p')\subseteq \supp(p)$ we have $p'^T R q =p^T R q$. Since $(p,q)$ is a Nash equilibrium  we have
 $p^T C e_j = p^T C q \equiv v_C$ for all $j \in \supp(q)$ and $p^T C e_j \leq v_C$ for all $j \notin \supp(q)$.
Since $d(p,p')=\alpha$ we have $$|p'^T C e_j - p^T C e_j| \leq |(p' - p) ^T C e_j| \leq \alpha, $$ for all $j$, so
$p'^T C e_j \geq v_C -\alpha$,  for all $j \in \supp(q)$ and $p'^T C e_j \leq v_C + \alpha$,  for all $j \notin \supp(q)$.
Thus $(p',q)$  is a well supported $2\alpha$-Nash equilibrium. By construction, we have $d((p',q),(p,q))=\alpha$.
Since $d$ is a metric, by the triangle inequality, we get that at least one of the pairs $(p',q)$ and $(p,q)$  is
at least $\alpha/2$ far from $(p^*,q^*)$; however they are both $2\alpha$ well supported Nash equilibria.
 This implies that we must have $\Delta \geq \epsilon/4$, as desired.
By Theorem~\ref{stab3eqstab2}, we immediately get (b) as well.
\end{proof}

\section{Examples}
\label{simple-examples}
To illustrate our notions of stability, we  present two $n$ by $n$ games satisfying  uniform stability to perturbations. 

{\bf Example 1~~} A classic game from experimental economics is the public goods game which is defined is as follows. We have two players and each can choose to play a number between $0$ and $n-1$ corresponding to an amount of money to contribute. If the Row  player contributes $i$ dollars and the Column player contributes $j$ dollars, then each gets back $0.75(i+j)$.  
So the payoff to the Row player is $0.75(i+j)-i$ and the payoff to the Column player is $0.75(i+j)-j$,
where $i \in \{0,1,...,n-1\}$ and $j \in \{0,1,...,n-1\}$.
This has payoffs ranging from $0$ up to $0.75(n-1)$, so to scale to the range $[0,1]$ as we do in our paper, we multiply all the payoffs by $1/n$.  I.e., if the Row player plays i and the Column player plays j then the payoff to the Row player is $[0.75j-0.25i]/n$ and the payoff to the Column player is
$[0.75 i-0.25 j]/n$.

First note that this game is $(\epsilon,0)$ stable to perturbations for all $\epsilon < 1/(8n)$.
To see this note that without any perturbation, for any $j$ and any $i \geq 1$ we have
         $e_0^T R e_j - e_i^T R e_j = 0.25 i/n \geq 0.25/n.$
That means that the Row  player prefers playing action $0$ compared to action $i$ by $0.25 i/n \geq 0.25/n$. So, in a game $R', C'$ that is an $L_\infty$ $\epsilon$-perturbation of of our game we get:
         $e_0^T R' e_j - e_i^T R' e_j \geq  0.25/n - 2\epsilon > 0.$
That means that in the perturbed game, the Row  player  still prefers playing action $0$.
This implies that the only equilibrium in the perturbed game has the  Row player playing action $0$, and similarly the Column player playing $0$, so the only equilibrium is $(0,0)$.

We now claim that  this game is not $(\epsilon,0.99)$ stable for any $\epsilon > 1/(4n)$.
To see this consider adding $\epsilon$ to $R[1,0]$.  I.e., if the  Row player plays action $1$ and the Column player plays action $0$, then the payoff to Row player is $-0.25/n + \epsilon > 0$. Now, $(1,0)$ is a Nash equilibrium since this payoff is strictly greater than $R[i,0]$ for any $i \neq 1$.  In particular, $R[0,0]=0$ and $R[i,0] < 0$ for all $i \geq 2$. So, there is now a Nash equilibrium (actually the unique Nash equilibium) at variation distance $1$ from the original Nash
equilibrium.

{\bf Example 2~~} We present here a variant of the identical interest game. Both players have  $n$ available actions. The first action is to stay home, and the other actions  correspond to $n-1$ different possible meeting locations. If a player chooses action $1$ (stay home), his payoff is $1/2$ no matter what the other player is doing. If the player chooses to go out to a meeting location, his payoff is $1$ if the other player is there as well and  it is $0$ otherwise. 
Formally, $R[1,j] = 1/2$ for all $j$,          $R[i,i] = 1$ for all $i > 1$
          $R[i,j] = 0$ for $i > 1$, $j \neq i$.
and similarly $C[i,1] = 1/2$, $C[j,j]=1$ for $j > 1$, $C[i,j]=0$ for $j>1,i \neq j$.
We claim that this  game is well supported $(\epsilon,2\epsilon)$-stable for all $\epsilon < 1/6$, so it is $2$-uniformly stable.

Note that $e_1^T R q=1/2$ and $e_i^T R q=q_i$ for $i>1$. Similarly, $p^T C e_1=1/2$ and $p^T C e_i=p_i$ for $i>1$.
Note that if $(p,q)$ is an  well supported $\epsilon$-Nash equilibrium and if $q_i < 1/2 - \epsilon$ for $i > 1$
then both $p_i=0$ and $q_i=0$. This follows immediately since  $e_1^T R q=1/2$  and  $e_i^T R q=q_i$ for $i>1$,
so $p_i$ must equal $0$ on any action
whose expected payoff is $ < 1/2 - \epsilon$.  Since $p_i = 0$, $q_i$ must equal $0$ as well in order to be well-supported.
Also note that  if $(p,q)$ is an  well supported $\epsilon$-Nash equilibrium and if $q_i > 1/2 + \epsilon$ for $i > 0$
then both $p_i=1$ and $q_i=1$. If $q_i > 1/2 + \epsilon$ we have $e_i^T R q=1/2 + \epsilon$ and since $e_j^T R q \leq 1/2$ for $j \neq i$ we must have $p_i=1$. This in turn implies $q_i=1$.

Similarly, we can show the same for the row player as well.
These imply that the  well supported $\epsilon$-Nash equilibria  that are not already Nash equilibria must satisfy: for any action $i>1$, $i \in \supp(q)$, we have $1/2 - \epsilon \leq q_i \leq 1/2 + \epsilon$  and  $1/2 - \epsilon \leq p_i \leq 1/2 + \epsilon$. Similarly, for any action $i>1$, $i \in \supp(p)$, we have $1/2 - \epsilon \leq q_i \leq 1/2 + \epsilon$  and  $1/2 - \epsilon \leq p_i \leq 1/2 + \epsilon$.
We have two cases. The first one is if there is exactly one action $i > 1$ in $\supp(q)$.  In that case, $(p,q)$ has
distance at most $\epsilon$ from the  Nash equilibrium $(1/2 e_0 + 1/2 e_i, 1/2 e_0 + 1/2 e_i)$.
The second one is if there are two such actions $i,j > 0$ in $\supp(q)$.  In that case, $(p,q)$ has
distance at most $2\epsilon$ from the  Nash equilibrium
$(1/2 e_i + 1/2 e_j, 1/2 e_i + 1/2 e_j)$, as desired.
%

\end{document}